\documentclass[envcountsame]{llncs}
\usepackage{ifthen}
\newcommand{\techRep}{true} 
\newcommand{\iftechrep}{\ifthenelse{\equal{\techRep}{true}}}
\pagestyle{plain}


\usepackage{times}
\usepackage{microtype}
\usepackage{mathtools}
\usepackage{amsfonts}
\usepackage{btran}
\usepackage{todonotes}
\sloppy

\usepackage{etoolbox}
\patchcmd{\quote}{\rightmargin}{\leftmargin 0.5em \rightmargin}{}{}

\addtolength{\abovecaptionskip}{-2mm}
\addtolength{\belowcaptionskip}{-2mm}
\addtolength{\topsep}{-4mm}
\addtolength{\partopsep}{-0.5mm}

\newenvironment{qtheorem}[1]{%
{\medskip\noindent\bfseries Theorem #1.\hspace{0mm}}
\begin{itshape}%
}{%
\end{itshape}%
}

\newenvironment{qcorollary}[1]{%
{\medskip\noindent\bfseries Corollary #1.\hspace{0mm}}
\begin{itshape}%
}{%
\end{itshape}%
}

\newenvironment{qproposition}[1]{%
{\medskip\noindent\bfseries Proposition #1.\hspace{0mm}}%
\begin{itshape}%
}{%
\end{itshape}%
}

\newcommand{\C}{\mathcal{C}}
\newcommand{\D}{\mathcal{D}}
\newcommand{\dist}{\mathit{dist}}
\newcommand{\En}{\mathit{En}}
\newcommand{\F}{\mathcal{F}}
\newcommand{\kmax}{k_\mathit{max}}

\newcommand{\M}{\mathcal{M}}
\newcommand{\N}{\mathbb{N}}
\renewcommand{\P}{\mathcal{P}}
\renewcommand{\paragraph}[1]{\medskip \noindent \textbf{#1}}
\newcommand{\pmin}{p_\mathit{min}}
\newcommand{\Run}{\mathit{Run}}
\newcommand{\tran}[1]{\xrightarrow{#1}}
\newcommand{\val}{\mathit{val}}
\renewcommand{\vec}[1]{\mathbf{#1}}

\begin{document}

\title{The Odds of Staying on Budget}
\author{Christoph Haase\inst{1}\thanks{Supported by Labex Digicosme, Univ. Paris-Saclay, project VERICONISS.} \and Stefan Kiefer\inst{2}}
\institute{
Laboratoire Sp\'ecification et V\'erification (LSV), CNRS \& ENS de Cachan, France
\and
Department of Computer Science, University of Oxford, UK
}

\maketitle

\begin{abstract}
Given Markov chains and Markov decision processes (MDPs) whose
transitions are labelled with non-negative integer costs, we study the
computational complexity of deciding whether the probability of paths
whose accumulated cost satisfies a Boolean combination of inequalities
exceeds a given threshold. For acyclic Markov chains, we show that
this problem is PP-complete, whereas it is hard for the
\textsc{PosSLP} problem and in \textsc{PSpace} for general Markov
chains. Moreover, for acyclic and general MDPs, we prove
\textsc{PSpace}- and \textsc{EXP}-completeness, respectively.
Our results have direct implications on the complexity of
computing reward quantiles in succinctly represented stochastic
systems.
\end{abstract}

\section{Introduction} \label{sec-introduction}

Computing the shortest path from $s$ to~$t$ in a directed graph is a
ubiquitous problem in computer science, so shortest-path algorithms
such as Dijkstra's algorithm are a staple for every computer
scientist. These algorithms work in polynomial time even if the edges
are weighted, so
questions of the following kind are easy to answer:
\begin{quote}
  (I) \emph{Is it possible to travel from Copenhagen to Kyoto in less
    than 15 hours?}
\end{quote}
From a complexity-theoretic point of view, even computing the length
of the shortest path lies in~NC, the class of problems with
``efficiently parallelisable'' algorithms.%
\footnote{The NC algorithm performs ``repeated squaring'' of the weight matrix
 in the $(\mathord{\max}, \mathord{+})$-algebra.}

The shortest-path problem becomes more intricate as soon as
uncertainties are taken into account.  For example, additional
information such as ``\emph{there might be congestion in Singapore, so
  the Singapore route will, with probability~10\%, trigger a delay of
  1 hour}'' naturally leads to questions of the following kind:
\begin{quote}
  (II) \emph{Is there a travel plan avoiding trips longer than 15
    hours with probability $\ge 0.9$?}
\end{quote}
\emph{Markov decision processes (MDPs)} are the established model to
formalise problems such as~(II).  In each \emph{state} of an MDP some
\emph{actions} are enabled, each of which is associated with a
probability distribution over outgoing \emph{transitions}.  Each
transition, in turn, determines the successor state and is equipped
with a non-negative ``weight''. The weight could be interpreted as
time, distance, reward, or---as in this paper---as \emph{cost}. For
another example, imagine the plan of a research project whose workflow
can be modelled by a directed weighted graph. In each project state
the investigators can hire a programmer, travel to collaborators,
acquire new equipment, etc., but each action costs money, and the
result (i.e., the next project state) is probabilistic. The objective
is to meet the goals of the project before exceeding its budget for
the total accumulated cost.
This leads to questions such as:
\begin{quote}
  (III) \emph{Is there a strategy to stay on budget with probability
    $\ge 0.85$?}
\end{quote}
MDP problems like (II)~and~(III) become even more challenging when
each transition is equipped with both a cost and a \emph{utility},
e.g.\ in order to model problems that aim at maximising the probability
that both a given budget is kept \emph{and} a minimum total utility is
achieved. Such \emph{cost-utility trade-offs} have recently been
studied in~\cite{BDKquantiles2014}.

The problems (II)~and~(III) may become easier if there is no
non-determinism, i.e., there are no actions. We then obtain
\emph{Markov chains} where the next state and the incurred transition
cost are chosen in a purely probabilistic fashion. Referring to the
project example above, the activities may be completely planned out,
but their effects (i.e.\ cost and next state) may still be
probabilistic, yielding problems of the kind:
\begin{quote}
 (IV) \emph{Will the budget be kept with probability $\ge 0.85$?}
\end{quote}
Closely related to the aforementioned decision problems is the
following optimisation problem, referred to as the \emph{quantile
  query}
in~\cite{BDDKKquantiles2014,BDKquantiles2014,UmmelsBaierQuantiles}. A
quantile query asked by a funding body, for instance, could be the
following:
\begin{quote}
  (V) \emph{Given a probability threshold~$\tau$, compute the smallest
    budget that suffices with probability at least~$\tau$.}
\end{quote}

Non-stochastic problems like~(I) are well understood. The purpose of
this paper is to investigate the complexity of MDP problems such as
(II) and (III), of Markov-chain problems such as~(IV), and of quantile
queries like~(V). More formally, the models we consider are Markov
chains and MDPs with non-negative integer costs, and the main focus of
this paper is on the \emph{cost problem} for those models: Given a
budget constraint $\varphi$ represented as a Boolean combination of
linear inequalities and a probability threshold $\tau$, we study the
complexity of determining whether the probability of paths reaching a
designated target state with cost consistent with $\varphi$ is at
least $\tau$.

In order to highlight and separate our problems more clearly from
those in the literature, let us briefly discuss two
approaches that do not, at least not in an obvious way, resolve the
core challenges. First, one approach to answer the MDP problems could
be to compute a strategy that minimises the \emph{expected} total
cost, which is a classical problem in the MDP literature, solvable in
polynomial time using linear programming
methods~\cite{book:Puterman}. However, minimising the expectation may
not be optimal: if you don't want to be late, it may be better to walk
than to wait for the bus, even if the bus saves you time in average.
The second approach with shortcomings is to phrase problems (II),
(III) and~(IV) as MDP or Markov-chain \emph{reachability} problems,
which are also known to be solvable in polynomial time. This, however,
ignores the fact that numbers representing cost are commonly
represented in their natural succinct \emph{binary}
encoding. Augmenting each state with possible accumulated costs leads
to a blow-up of the state space which is exponential in the
representation of the input, giving an \textsc{EXP} upper bound as
in~\cite{BDKquantiles2014}.

\paragraph{Our contribution.}
The goal of this paper is to comprehensively investigate under which
circumstances and to what extent the complexity of the cost
problem and of quantile queries may be below the \textsc{EXP} upper bound.
We also provide new lower bounds, much stronger than the best known NP lower
bound derivable from~\cite{LSdurational05}.
We distinguish between acyclic and general control graphs.
In short, we show that the cost problem is
\begin{itemize}
\item PP-complete for acyclic Markov chains, and hard for the
  \textsc{PosSLP} problem and in \textsc{PSpace} in the general case;
  and
\item \textsc{PSpace}-complete for acyclic MDPs, and
  \textsc{EXP}-complete for general MDPs.
\end{itemize}
\iftechrep{}{Due to space constraints, we can only present the ideas
  underlying the proofs of our results; full details can be found in
  the technical report accompanying this paper~\cite{HK-ICALP15-TR}.}


\paragraph{Related Work.}
The motivation for this paper comes from the work on quantile queries
in~\cite{BDDKKquantiles2014,BDKquantiles2014,UmmelsBaierQuantiles}
mentioned above and on model checking so-called durational
probabilistic systems~\cite{LSdurational05} with a probabilistic timed
extension of CTL. 
While the focus of~\cite{UmmelsBaierQuantiles} is mainly on
``qualitative'' problems where the probability threshold is either 0
or~1, an iterative linear-programming-based approach for solving
quantile queries has been suggested
in~\cite{BDDKKquantiles2014,BDKquantiles2014}. The authors report
satisfying experimental results, the worst-case complexity however
remains exponential time. Settling the complexity of quantile queries
has been identified as one of the current challenges in the conclusion
of~\cite{BDKquantiles2014}.

Recently, there has been considerable interest in models of stochastic
systems that extend weighted graphs or counter systems,
see~\cite{VarSSP15} for a very recent survey.
Multi-dimensional percentile queries for various payoff functions are studied in~\cite{RRS-CAV15}.
The work by
Bruy{\`e}re et al.~\cite{BFRRStacs14} has also been motivated by the
fact that minimising the expected total cost is not always an adequate
solution to natural problems. For instance, they consider the problem
of computing a scheduler in an MDP with positive integer weights that
ensures that both the expected and the maximum incurred cost remain
below a given values. Other recent work also investigated MDPs with a
single counter ranging over the non-negative integers, see
e.g.~\cite{BBEK13,EBKBWSoda10}. However, in that work updates to the
counter can be both positive and negative. For that reason, the
analysis focuses on questions about the counter value \emph{zero},
such as designing a strategy that maximises the probability of
reaching counter value zero.



\section{Preliminaries} \label{sec-prelim}

We write $\N = \{0, 1, 2, \ldots\}$.
For a countable set $X$ we write $\dist(X)$ for the set of \emph{probability distributions} over~$X$;
 i.e., $\dist(X)$ consists of those functions $f : X \to [0,1]$ such that $\sum_{x \in X} f(x) = 1$.

\paragraph{Markov Chains.} A \emph{Markov chain} is a triple $\M = (S,s_0,\delta)$,
where $S$ is a countable (finite or infinite) set of states, $s_0 \in S$ is an initial state,
and $\delta: S \to \dist(S)$ is a probabilistic transition function
 that maps a state to a probability distribution over the successor states.
Given a Markov chain we also write $s \tran{p} t$ or $s \tran{} t$ to indicate that $p = \delta(s)(t) > 0$.
A \emph{run} is an infinite sequence $s_0 s_1 \cdots \in \{s_0\} S^\omega$
 with $s_i \tran{} s_{i+1}$ for $i \in \N$.
We write $\Run(s_0 \cdots s_k)$ for the set of runs that start with $s_0 \cdots s_k$.
To~$\M$ we associate the standard probability space $(\Run(s_0),\F,\P)$
where $\F$ is the $\sigma$-field generated by all basic cylinders
 $\Run(s_0 \cdots s_k)$ with $s_0 \cdots s_k \in \{s_0\} S^*$,
and $\P: \F \to [0,1]$ is the unique probability measure such that $\P(\Run(s_0 \cdots s_k)) = \prod_{i=1}^{k} \delta(s_{i-1})(s_i)$.

\paragraph{Markov Decision Processes.}
A \emph{Markov decision process (MDP)} is a tuple $\D = (S, s_0, A, \En, \delta)$,
 where $S$ is a countable set of states, $s_0 \in S$ is the initial state,
 $A$ is a finite set of actions,
 $\En : S \to 2^A \setminus \emptyset$ is an action enabledness function that assigns to each state~$s$ the set $\En(s)$ of actions enabled in~$s$,
 and $\delta : S \times A \to \dist(S)$ is a probabilistic transition function that maps a state~$s$ and an action $a \in \En(s)$ enabled in~$s$
 to a probability distribution over the successor states.
A (deterministic, memoryless) \emph{scheduler} for~$\D$ is a function $\sigma : S \to A$
 with $\sigma(s) \in \En(s)$ for all $s \in S$.
A scheduler~$\sigma$ induces a Markov chain $\M_\sigma = (S, s_0, \delta_\sigma)$
 with $\delta_\sigma(s) = \delta(s,\sigma(s))$ for all $s \in S$.
We write $\P_\sigma$ for the corresponding probability measure of~$\M_\sigma$.

\paragraph{Cost Processes.}
A \emph{cost process} is a tuple $\C = (Q, q_0, t, A, \En, \Delta)$,
where $Q$ is a finite set of control states, $q_0 \in Q$ is the
initial control state, $t$ is the target control state, $A$ is a
finite set of actions, $\En : Q \to 2^A \setminus \emptyset$ is an
action enabledness function that assigns to each control state $q$ the
set $\En(q)$ of actions enabled in~$q$, and $\Delta : Q \times A \to
\dist(Q \times \N)$ is a probabilistic transition function.  Here, for
$q, q' \in Q$, $a \in \En(q)$ and $k \in \N$, the value $\Delta(q,
a)(q',k) \in [0,1]$ is the probability that, if action~$a$ is taken in
control state~$q$, the cost process transitions to control state~$q'$
and cost~$k$ is incurred.  For the complexity results we define the
\emph{size} of~$\C$ as the size of a succinct description, i.e., the
costs are encoded in binary, the probabilities are encoded as
fractions of integers in binary (so the probabilities are rational),
and for each $q \in Q$ and $a \in
\En(q)$, the distribution $\Delta(q,a)$ is described by the list of
triples $(q',k,p)$ with $\Delta(q,a)(q',k) = p > 0$ (so we
assume this list to be finite).  Consider the
directed graph $G=(Q,E)$ with
 \[
  E := \{ (q,q') \in (Q \setminus \{t\}) \times Q : \exists a \in \En(q) \ \exists k \in \N.\ \Delta(q,a)(q',k) > 0\} \;.
 \]
We call $\C$ \emph{acyclic} if $G$ is acyclic (which can be determined in linear time).

A cost process~$\C$ induces an MDP $\D_\C = (Q \times \N, (q_0,0), A,
\En', \delta)$ with $\En'(q,c) = \En(q)$ for all $q \in Q$ and $c \in
\N$, and $ \delta((q,c),a)(q',c') = \Delta(q,a)(q',c'-c) $ for all
$q,q' \in Q$ and $c, c' \in \N$ and $a \in A$.  For a state $(q,c) \in
Q \times \N$ in~$\D_\C$ we view~$q$ as the current control state and
$c$ as the current cost, i.e., the cost accumulated thus far. We refer
to~$\C$ as a \emph{cost chain} if $|\En(q)| = 1$ holds for all $q \in
Q$. In this case one can view $\D_\C$ as the Markov chain induced by
the unique scheduler of~$\D_\C$.  For cost chains, actions are not
relevant, so we describe cost chains just by the tuple $\C = (Q, q_0,
t, \Delta)$.

Recall that we restrict schedulers to be deterministic and memoryless,
as such schedulers will be sufficient for the objectives in this
paper.  Note, however, that our definition allows schedulers to depend
on the current cost, i.e., we may have schedulers~$\sigma$ with
$\sigma(q,c) \ne \sigma(q,c')$.

\paragraph{The accumulated cost~$K$.}
In this paper we will be interested in the cost accumulated during a run before reaching the target state~$t$.
For this cost to be a well-defined random variable,
we make two assumptions on the system:
(i)~We assume that $\En(t) = \{a\}$ holds for some $a \in A$ and
$\Delta(t,a)(t,0) = 1$.  Hence, runs that visit~$t$ will not leave~$t$
and accumulate only a finite cost.  (ii)~We assume that for all
schedulers the target state~$t$ is almost surely reached, i.e., for
all schedulers the probability of eventually visiting a state $(t,c)$
with $c \in \N$ is equal to one.  The latter condition can be verified
by graph algorithms in time quadratic in the input size, e.g., by
computing the \emph{maximal end components} of the MDP obtained
from~$\C$ by ignoring the cost, see
e.g.\ \cite[Alg.~47]{BaierKatoen-book}.

Given a cost process~$\C$ 
 we define a random variable $K_\C : \Run((q_0,0)) \to \N$ such that $K_\C((q_0,0) \ (q_1,c_1) \ \cdots) = c$
 if there exists $i \in \N$ with $(q_i,c_i) = (t,c)$.
We often drop the subscript from~$K_\C$ if the cost process~$\C$ is clear from the context.
We view $K(w)$ as the accumulated cost of a run~$w$.

From the above-mentioned assumptions on~$t$, it follows that for any
scheduler the random variable~$K$ is almost surely defined.  Dropping
assumption~(i) would allow the same run to visit states $(t,c_1)$ and
$(t,c_2)$ for two different $c_1, c_2 \in \N$.  There would still be
reasonable ways to define a cost~$K$, but no apparently best way. If
assumption~(ii) were dropped, we would have to deal with runs that do
not visit the target state~$t$.  In that case one could study the
random variable~$K$ as above \emph{conditioned} under the event that
$t$ is visited. For Markov chains, \cite[Sec.~3]{BKKM14conditional}
describes a transformation that preserves the distribution of the
conditional cost~$K$, but $t$ is almost surely reached in the
transformed Markov chain.  In this sense, our assumption~(ii) is
without loss of generality for cost chains.  For general cost
processes the transformations of~\cite{BKKM14conditional} do not work.
In fact, a scheduler that ``optimises''~$K$ conditioned under
reaching~$t$ might try to avoid reaching~$t$ once the accumulated cost
has grown unfavourably.  Hence, dropping assumption~(ii) in favour of
conditional costs would give our problems an aspect of multi-objective
optimisation, which is not the focus of this paper.

\paragraph{The cost problem.}
Let $x$ be a fixed variable.
An \emph{atomic cost formula} is an inequality of the form $x \le B$ where $B \in \N$ is encoded in binary.
A \emph{cost formula} is an arbitrary Boolean combination of atomic cost formulas.
A number $n \in \N$ \emph{satisfies} a cost formula~$\varphi$, in symbols $n \models \varphi$,
 if $\varphi$ is true when $x$ is replaced by~$n$.

This paper mainly deals with the following decision problem: given a
cost process~$\C$, a cost formula~$\varphi$, and a probability
threshold $\tau \in [0,1]$, the \emph{cost problem} asks whether there
exists a scheduler~$\sigma$ with $\P_\sigma(K_\C \models \varphi) \ge
\tau$.  The case of an atomic cost formula~$\varphi$ is an important
special case.  Clearly, for cost chains~$\C$ the cost problem simply
asks whether $\P(K_\C \models \varphi) \ge \tau$ holds.  One can
assume $\tau = 1/2$ without loss of generality, thanks to a simple
construction, see \iftechrep{Prop.~\ref{prop-a-half} in App.~\ref{app-prelim}}{\cite{HK-ICALP15-TR}}.
Moreover, with an oracle for the cost problem at hand, one can use
binary search over~$\tau$ to approximate $\P_\sigma(K \models
\varphi)$: $i$ oracle queries suffice to approximate $\P_\sigma(K
\models \varphi)$ within an absolute error of $2^{-i}$.

By our definition, the MDP~$\D_\C$ is in general infinite as there is
no upper bound on the accumulated cost.  However, when solving the
cost problem, there is no need to keep track of costs above~$B$, where
$B$ is the largest number appearing in~$\varphi$.  So one can solve
the cost problem in so-called \emph{pseudo-polynomial time} (i.e.,
polynomial in~$B$, not in the size of the encoding of~$B$) by
computing an explicit representation of a restriction, say
$\widehat{\D}_\C$, of~$\D_\C$ to costs up to~$B$, and then applying
classical linear-programming techniques \cite{book:Puterman} to
compute the optimal scheduler for the finite MDP~$\widehat{\D}_\C$.
Since we consider reachability objectives, the optimal scheduler is
deterministic and memoryless.  This shows that our restriction to
deterministic memoryless schedulers is without loss of generality.  In
terms of our succinct representation we have:
\begin{proposition} \label{fact-EXPTIME-upper}
 The cost problem is in \textsc{EXP}.
\end{proposition}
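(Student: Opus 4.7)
The plan is to construct a finite MDP whose size is exponential in the input, and on which the cost problem can be solved in polynomial time by standard techniques; combining these gives an \textsc{EXP} algorithm.

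First, I would identify the largest constant $B$ appearing in any atomic sub-formula $x \le B_i$ of $\varphi$. Since costs are only compared against thresholds up to $B$, the truth value of $\varphi$ is determined once the accumulated cost reaches a value exceeding $B$; more precisely, the indicator function $n \models \varphi$ depends only on $\min(n, B{+}1)$. This justifies truncating $\D_\C$ to costs in $\{0, 1, \ldots, B{+}1\}$: define $\widehat{\D}_\C$ to have state space $Q \times \{0, \ldots, B{+}1\}$, identify all states $(q, c)$ with $c > B$ by a single ``overflow'' value $B{+}1$, and let every transition that would land at accumulated cost $c' > B$ in $\D_\C$ be redirected to the corresponding overflow state in $\widehat{\D}_\C$. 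All transitions and probabilities are inherited from $\Delta$ in the obvious way. Because $B$ is encoded in binary, $B$ can be exponential in $|\C|$, so $\widehat{\D}_\C$ has a state space of size $|Q| \cdot (B+2)$, which is exponential in the size of the input; however its description can be computed explicitly in exponential time.

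Next, I would reduce the cost problem on~$\C$ to a reachability-probability computation on $\widehat{\D}_\C$. Designate as target the set $T_\varphi = \{(t, c) : c \in \{0, \ldots, B{+}1\}, \ c \models \varphi\}$ (treating the overflow value $B{+}1$ as any integer strictly greater than $B$, which uniquely determines its satisfaction of $\varphi$). Assumption~(ii) on~$\C$ ensures that~$t$ is almost surely reached under every scheduler in the original $\D_\C$, hence the projection onto $\widehat{\D}_\C$ almost surely settles either in $T_\varphi$ or in $\{t\} \times \{0,\ldots,B{+}1\} \setminus T_\varphi$. Therefore the value $\sup_\sigma \P_\sigma(K_\C \models \varphi)$ equals the maximum probability, over memoryless deterministic schedulers of the finite MDP $\widehat{\D}_\C$, of reaching $T_\varphi$. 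That memoryless deterministic schedulers suffice for maximal reachability in finite MDPs is classical \cite{book:Puterman}.

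Finally, I would compute this maximum reachability probability on $\widehat{\D}_\C$ via the standard linear program, which runs in polynomial time in the size of $\widehat{\D}_\C$ and in the bit-length of the rational probabilities; by comparing the resulting value with~$\tau$ one decides the cost problem. Since $\widehat{\D}_\C$ has size exponential in $|\C| + |\varphi|$ and the LP is polynomial-time, the overall procedure is in \textsc{EXP}. The one point requiring a little care, rather than a genuine obstacle, is the correct treatment of the overflow state so that the truncation faithfully preserves the event $\{K_\C \models \varphi\}$; this follows from the monotone dependence of $\varphi$'s truth value on $\min(K_\C, B{+}1)$ together with assumption~(ii).
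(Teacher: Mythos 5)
Your proposal is correct and follows essentially the same route as the paper: truncate the accumulated cost at the largest constant $B$ of~$\varphi$ (collapsing everything above into an overflow value), build the resulting exponential-size finite MDP $\widehat{\D}_\C$ explicitly, and solve the reachability problem on it with standard linear-programming techniques, which yields an \textsc{EXP} bound since $B$ is given in binary. The extra care you take with the overflow state and the target set $T_\varphi$ is a fine elaboration of the same argument, not a different approach.
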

Heuristic improvements to this approach were suggested in~\cite{UmmelsBaierQuantiles,BDDKKquantiles2014}.
The subject of this paper is to investigate to what extent the \textsc{EXP} complexity is optimal.

\section{Quantile Queries} \label{sec-quantile}

In this section we consider the following function problem, referred to as
\emph{quantile query} in \cite{UmmelsBaierQuantiles,BDDKKquantiles2014,BDKquantiles2014}.
Given a cost chain~$\C$ and a probability threshold~$\tau$,
 a quantile query asks for the smallest budget~$B$ such that $\P_\sigma(K_\C \le B) \ge \tau$.
We show that polynomially many oracle queries to the cost problem for atomic cost formulas ``$x \le B$'' suffice to answer a quantile query.
This can be done using binary search over the budget~$B$.
The following proposition\iftechrep{, proved in App.~\ref{app-quantile},}{}
 provides a suitable general upper bound on this binary search,
 by exhibiting a concrete sufficient budget, computable in polynomial time:
\newcommand{\stmtpropaprioriupperbound}{
 Suppose $0 \le \tau < 1$.
 Let $\pmin$ be the smallest non-zero probability and $\kmax$ be the largest cost
  in the description of the cost process. Then $\P_\sigma(K \le B) \ge \tau$ holds for all schedulers~$\sigma$, where
  \[
   B := \kmax \cdot \left\lceil |Q| \cdot \left( -\ln (1-\tau) / \pmin^{|Q|} \ + \ 1 \right) \right\rceil\;.
  \]
}
\begin{proposition} \label{prop-apriori-upper-bound}
\stmtpropaprioriupperbound
\end{proposition}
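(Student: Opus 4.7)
The plan is to bound the first hitting time of~$t$ in~$\M_\sigma$ uniformly over all schedulers~$\sigma$ by a geometric tail bound, obtained by grouping steps into blocks of length~$|Q|$, and then to translate this bound into one on the accumulated cost using that each step contributes at most~$\kmax$.

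The core step is the following key lemma: for every scheduler~$\sigma$ and every state $(q,c)$ reachable in~$\M_\sigma$ with $q \ne t$, the conditional probability under~$\sigma$ of reaching~$t$ within the next $|Q|$ steps is at least $\pmin^{|Q|}$. To prove it I would build an attractor on the cost-less MDP underlying~$\C$: set $A_0 := \{t\}$ and
\[
 A_{i+1} \ := \ A_i \cup \{q' \in Q : \forall a \in \En(q'),\ \exists q'' \in A_i,\ k \in \N \text{ with } \Delta(q',a)(q'',k) > 0\}.
\]
The chain $A_0 \subseteq A_1 \subseteq \cdots$ stabilises within $|Q|$ steps at some set~$A_\infty$. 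Assumption~(ii) forces every state reachable from~$q_0$ into~$A_\infty$: otherwise, selecting at each state of $Q \setminus A_\infty$ an action witnessing its exclusion from the attractor would yield a scheduler that, with positive probability, reaches some state outside~$A_\infty$ and thence avoids~$t$ forever, contradicting almost-sure reachability. Defining the rank $r(q') := \min\{i : q' \in A_i\} \le |Q|$, the construction of~$A_{i+1}$ guarantees that at every non-target state, \emph{every} enabled action has some successor of strictly smaller rank. Starting from $(q,c)$ I can therefore follow, step by step, a rank-decreasing successor that is available under whatever action $\sigma$ happens to choose at the current state, reaching~$t$ in at most $r(q) \le |Q|$ steps; each transition has probability at least~$\pmin$, so this specific path has probability at least $\pmin^{|Q|}$.

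Given the key lemma, the Markov property together with induction on~$n$ yields $\P_\sigma(T > n|Q|) \le (1 - \pmin^{|Q|})^n \le \exp(-n \pmin^{|Q|})$, where $T$ is the first hitting time of~$t$ in~$\M_\sigma$. Setting $n := \lceil -\ln(1-\tau) / \pmin^{|Q|} \rceil$ pushes the right-hand side below $1-\tau$, and $n \le -\ln(1-\tau)/\pmin^{|Q|} + 1$. Since each transition adds at most~$\kmax$ to the accumulated cost, $T \le n |Q|$ implies $K \le \kmax \cdot n \cdot |Q| \le B$ for the~$B$ in the statement, whence $\P_\sigma(K \le B) \ge \P_\sigma(T \le n|Q|) \ge \tau$.

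The main obstacle is the key lemma: because~$\sigma$ may depend on the accumulated cost and $\M_\sigma$ lives on the infinite state space $Q \times \N$, one cannot directly invoke standard finite-MDP bounds on hitting times, and cost-dependent choices could in principle prolong the control-state trace arbitrarily. The attractor construction circumvents this by operating entirely on the finite cost-less MDP while guaranteeing a rank-decreasing successor under \emph{every} enabled action, making~$\sigma$'s cost-dependent choice irrelevant to the existence of the short path to~$t$.
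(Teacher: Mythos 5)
Your proof is correct and takes essentially the same route as the paper's: a per-block success probability of $\pmin^{|Q|}$ for blocks of $|Q|$ steps, a geometric tail bound on the hitting time of~$t$ uniform over all schedulers, and the conversion from time to cost via the bound~$\kmax$ per step. The only difference is presentational: your attractor/rank lemma spells out why cost-dependent schedulers on the infinite state space $Q \times \N$ still admit a path to~$t$ of length at most~$|Q|$, a fact the paper asserts directly from assumption~(ii), and your use of $1-x \le e^{-x}$ lets you skip the paper's separate treatment of the case $\pmin = 1$.
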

The case $\tau = 1$ is covered by~\cite[Thm.~6]{UmmelsBaierQuantiles},
 where it is shown that one can compute in polynomial time the smallest~$B$ with
  $\P_\sigma(K \le B) = 1$ for all schedulers~$\sigma$, if such $B$ exists.
We conclude that quantile queries are polynomial-time inter-reducible with the cost problem for atomic cost formulas.

\section{Cost Chains} \label{sec-MC}
In this section we consider the cost problems for acyclic and general
cost chains. Even in the general case we obtain \textsc{PSpace}
membership, avoiding the \textsc{EXP} upper bound from
Prop.~\ref{fact-EXPTIME-upper}.

\paragraph{Acyclic Cost Chains.} \label{sub-MC-acyclic}
The complexity class PP~\cite{Gill77:PP} can be defined as the class
of languages~$L$ that have a probabilistic polynomial-time bounded
Turing machine~$M_L$ such that for all words~$x$ one has $x \in L$ if
and only if $M_L$ accepts~$x$ with probability at least~$1/2$. The
class~PP includes~NP~\cite{Gill77:PP},
and Toda's theorem states that P$^{\text{PP}}$ contains the polynomial-time
hierarchy~\cite{Toda91}.
We show that the cost problem for acyclic cost chains is PP-complete.

\newcommand{\stmtthmMCacyclic}{The cost problem for acyclic cost
  chains is in PP. It is PP-hard under polynomial-time Turing
  reductions, even for atomic cost formulas.}
\begin{theorem} \label{thm-MC-acyclic}
\stmtthmMCacyclic
\end{theorem}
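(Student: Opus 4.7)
The plan for PP membership starts from the observation (already stated in the excerpt) that without loss of generality $\tau = 1/2$; the goal is then to express $\P(K \models \varphi)$ as a ratio $N_\varphi / D$ with $D$ polynomially computable and $N_\varphi \in \#\mathrm{P}$, after which membership follows from the standard characterization $\mathrm{PP} = \{(x,t) : f(x) \ge g(x),\ f \in \#\mathrm{P},\ g \in \mathrm{FP}\}$. Concretely I would write each transition probability as $n_{q,q',k}/d_q$ in lowest terms, put $D := \prod_{q \ne t} d_q$ (of polynomial bit length), and observe that for every maximal path $\pi = q_0 q_1 \cdots q_\ell = t$ the probability $\P(\pi)$ equals $w(\pi)/D$ for the integer weight
\[
  w(\pi) \;:=\; \Bigl(\prod_{i=0}^{\ell-1} n_{q_i,q_{i+1},c_{i+1}}\Bigr) \cdot \prod_{q \in Q \setminus \{t, q_0, \ldots, q_{\ell-1}\}} d_q.
\]
Almost-sure reachability of $t$ (assumption (ii)) gives $\sum_\pi w(\pi) = D$, so the cost question becomes $\sum_{\pi : K(\pi) \models \varphi} w(\pi) \ge \lceil D/2 \rceil$. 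The left-hand side is a $\#\mathrm{P}$ function: an NP machine guesses a maximal path (of length at most $|Q|$ since $\C$ is acyclic), verifies $K(\pi) \models \varphi$, and then branches to guess one integer in each range $[1, n_{q_i, q_{i+1}, c_{i+1}}]$ along the path and $[1, d_q]$ for every off-path state, producing exactly $w(\pi)$ accepting continuations per valid~$\pi$.

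For PP-hardness (even for atomic cost formulas), I would reduce many-one from \textsc{MajSubsetSum}: given $a_1, \ldots, a_N, T \in \N$ in binary, decide whether $|\{S \subseteq [N] : \sum_{i \in S} a_i \le T\}| \ge 2^{N-1}$. The corresponding cost chain is a polynomial-size acyclic chain with control states $q_0, \ldots, q_N, t$ where at $q_{i-1}$ the unique action moves to $q_i$ with probability $1/2$ and cost $0$, or with probability $1/2$ and cost $a_i$, and $q_N$ goes to $t$ for free. Then $K$ distributes as a uniformly random subset sum, so $\P(K \le T) = |\{S : \sum_{i\in S} a_i \le T\}|/2^N$, and the atomic formula $x \le T$ with $\tau = 1/2$ exactly answers the \textsc{MajSubsetSum} instance.

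The main obstacle is establishing that \textsc{MajSubsetSum} itself is PP-hard: containment is clear, but hardness requires the specific threshold $2^{N-1}$ (rather than an arbitrary~$k$) to already be hard. I plan to handle this by invoking a parsimonious reduction from $\#3$-\textsc{Sat} to $\#$\textsc{SubsetSum}, so that PP-hardness transfers from \textsc{MajSat}. If parsimony proves delicate, the Turing-reduction flexibility provides a robust fallback: using Karp's non-parsimonious reduction and binary search over~$\tau$, one can evaluate $\P(K \le T)$ and $\P(K \le T-1)$ exactly (they are rationals with denominator $2^N$, so polynomially many oracle queries suffice), recover $\P(K = T)$ by subtraction, and decide \textsc{MajSat} by comparison to the appropriate threshold---yielding a polynomial-time Turing reduction from a PP-complete problem to the atomic cost problem.
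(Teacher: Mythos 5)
Your proposal is correct in substance but takes a more self-contained route than the paper. For membership, the paper argues directly with a probabilistic polynomial-time Turing machine that simulates the chain step by step, tracking the accumulated cost, and it delegates the simulation of coins biased according to the (binary-encoded rational) transition probabilities to a lemma in the authors' companion paper; you instead clear denominators and exhibit the relevant quantity as a $\#\mathrm{P}$ function compared against an FP-computable threshold, which is a valid equivalent characterisation of PP. Your accounting of the path weights $w(\pi)$ (numerators along the path times denominators of the off-path non-target states) is sound, because acyclicity together with the fact that every non-target state has a successor forces every run into~$t$ within $|Q|$ steps, so the guessed objects have polynomial size and the weights sum to~$D$. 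For hardness, the paper does not reprove anything: it cites an earlier reduction from the \textsc{$K$th largest subset} problem to the acyclic cost problem with atomic formulas (essentially your uniform subset-sum chain) and the companion result that this problem is PP-hard under polynomial-time Turing reductions. You rebuild that hidden content from \textsc{MajSat} via a parsimonious reduction from $\#3$-\textsc{Sat} to $\#$\textsc{SubsetSum}; this is indeed where the real work lies, and your chain construction and the binary-search Turing reduction are exactly in the spirit of what the cited results provide.

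Two caveats on the hardness half. First, parsimony alone does not give the advertised many-one reduction to your \textsc{MajSubsetSum}: the parsimonious reduction controls the number of subsets with sum \emph{exactly} $T$, and the natural threshold is $2^{n-1}$ for the $n$ formula variables, not $2^{N-1}$ for the $N$ constructed items, so the fixed-threshold ``$x \le T$'' version would need additional padding gadgets. This does not endanger the theorem, which only claims hardness under Turing reductions, and your binary-search argument (recovering $\P(K\le T)$ and $\P(K\le T-1)$ exactly, hence $\P(K=T)$ and the exact solution count) already yields such a reduction from \textsc{MajSat}. Second, that fallback as you phrase it --- ``using Karp's non-parsimonious reduction'' --- does not work verbatim: if the number of subset-sum solutions per satisfying assignment varies with the assignment, knowing the exact solution count does not decide \textsc{MajSat} by comparison with any threshold. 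You need the parsimonious (or otherwise count-controlled) reduction in the fallback as well; since such a reduction is standard, this is easily repaired.
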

\begin{proof}[sketch]
  To show membership in~PP,
  we construct a probabilistic Turing machine that simulates the acyclic cost chain, and keeps track of the currently accumulated cost on the tape.
  For the lower bound, it follows from~\cite[Prop.~4]{LSdurational05}
  that an instance of the \textsc{$K$th largest subset} problem can be
  reduced to a cost problem for acyclic cost chains with atomic cost
  formulas.
  We show in~\cite[Thm.~3]{HK15} that this problem is PP-hard under  polynomial-time Turing reductions.
\qed
\end{proof}
PP-hardness strengthens the NP-hardness result
from~\cite{LSdurational05} substantially:
by Toda's theorem it follows
that any problem in the polynomial-time hierarchy can be solved by a
deterministic polynomial-time bounded Turing machine that has oracle
access to the cost problem for acyclic cost chains.

\paragraph{General Cost Chains.} \label{sub-MC-cyclic}
For the PP upper bound in Thm.~\ref{thm-MC-acyclic},
 the absence of cycles in the control graph seems essential.
Indeed, we can use cycles to show hardness for the \textsc{PosSLP} problem,
suggesting that the acyclic and the general case have different complexity.
\textsc{PosSLP} is a fundamental problem for numerical computation~\cite{Allender09numericalAnalysis}.
Given 
an arithmetic circuit
with operators $\mathord{+}$, $\mathord{-}$,
$\mathord{*}$, inputs 0 and 1, and a designated output gate,
the \textsc{PosSLP} problem asks whether
the circuit outputs a positive integer.
\textsc{PosSLP} is in \textsc{PSpace}; in fact, it lies in
the 4th level of the \emph{counting hierarchy
  (CH)}~\cite{Allender09numericalAnalysis}, an analogue to the
polynomial-time hierarchy for classes like~PP.
We have the following theorem:

\newcommand{\stmtthmMCcyclic}{ The cost problem for cost chains is in
  \textsc{PSpace} and hard for \textsc{PosSLP}.  }
\begin{theorem}\label{thm-MC-cyclic}
\stmtthmMCcyclic
\end{theorem}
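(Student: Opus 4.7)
My plan is to prove the two halves independently: the \textsc{PSpace} upper bound by reducing the cost problem to PosSLP, and PosSLP-hardness by a direct encoding of arithmetic circuits into cost chains with cycles.

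For the \textsc{PSpace} upper bound, I would first decompose the cost formula~$\varphi$ into a union of disjoint intervals $[L_j, R_j]$ with binary bounds, so that $\P(K_\C \models \varphi)$ becomes a signed sum of quantities $\P(K_\C \le B)$ for polynomially many binary-encoded~$B$. The key claim is then that each such quantity is the value of a polynomial-size division-free arithmetic circuit over~$\mathbb{Z}$. To see this, form the transfer matrix $M(x)$ with entries $M(x)_{q,q'} = \sum_{k} \Delta(q)(q',k)\,x^k$ and write $f_q(x)$ for the generating function of the cost to reach~$t$ starting from~$q$; the vector $\vec{f}(x)$ satisfies a linear system $(I - M'(x))\,\vec{f}(x) = \vec{1}_{\{t\}}$ obtained by making~$t$ absorbing, and Cramer's rule expresses $f_{q_0}(x)$ as a ratio $P(x)/Q(x)$ of integer polynomials of polynomial degree with coefficients of polynomial bit-length. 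Since $\P(K_\C \le B) = [x^B]\bigl( f_{q_0}(x)/(1-x) \bigr)$ and the coefficient sequence of any rational function whose denominator has polynomial degree satisfies a linear recurrence of that order, the $B$-th coefficient is the result of applying a polynomial-size companion matrix $B$ times to an initial vector. Repeated squaring compresses this into an arithmetic circuit of size $\mathrm{poly}(|\C|) \cdot \log B$. Finally, the cost-problem inequality $\P(K_\C \models \varphi) \ge \tau$ becomes a sign question on an integer produced by a polynomial-size arithmetic circuit, i.e., an instance of \textsc{PosSLP}, which is known to lie in \textsc{PSpace}.

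For \textsc{PosSLP}-hardness, I would reduce an arbitrary division-free arithmetic circuit~$C$ with inputs $0, 1$ and operators $+, -, *$ to a cost-chain instance. Each intermediate value~$v_g$ at a gate~$g$ comes with a standard upper bound~$2^{N_g}$ on its magnitude, and will be represented in the chain by a probability $p_g = (v_g + 2^{N_g})/2^{N_g + 1} \in [0,1]$. Constants are realised by trivial sub-chains; sums, products, and differences of two sub-chains are realised respectively by a fair random branch, by sequential composition, and by a $1 - p$ complement gadget, each followed by the appropriate renormalisation. The essential use of cycles lies in realising the exponential scaling that~$2^{N_g}$ forces: a one-state self-loop with stay-probability~$p$ and unit cost satisfies $\P(K_\C > 2^k - 1) = p^{2^k}$, so combining it with the atomic cost constraint $x \le 2^k - 1$ yields a probability $1 - p^{2^k}$, effectively implementing repeated squaring of~$p$ inside a gadget of size polynomial in~$k$. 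Composing these building blocks produces an overall probability of the form $1/2 + v/2^M$ for the output~$v$ of~$C$, so that the cost problem with threshold~$1/2$ coincides with the PosSLP answer.

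The main obstacle is the \textsc{PSpace} upper bound. The crux is showing that $\P(K_\C \le B)$ for exponential~$B$ admits a polynomial-size arithmetic circuit; the linear-recurrence structure of the coefficient sequence of a rational generating function of polynomial degree, together with repeated squaring of a polynomial-size companion matrix, is what makes this possible, and reducing the final comparison to PosSLP is what prevents the exponentially many bits of the output from costing us exponential time. A secondary delicacy in the hardness reduction is ensuring that intermediate encoded probabilities always lie in $[0,1]$ and that the scaling exponents~$N_g$ are tracked consistently between gadgets; the uniform shift-and-scale encoding plus the $2^k$-cycle gadget handle this cleanly.
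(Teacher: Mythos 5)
Your \textsc{PSpace} upper bound has a fatal gap at its central claim. You assert that Cramer's rule expresses $f_{q_0}(x)$ as a ratio of integer polynomials ``of polynomial degree'', but the transition costs are encoded in \emph{binary}: the entries of $M(x)$ are sparse polynomials of degree up to $\kmax$, which is exponential in the input size, so $\det(I-M'(x))$ has degree up to $|Q|\cdot\kmax$ (and is not polynomially sparse either, being a sum of products of sparse polynomials with widely spread exponents). Hence the linear recurrence satisfied by the coefficient sequence has exponential order, the companion matrix has exponential dimension, and repeated squaring gives a circuit of exponential, not polynomial, size. A useful sanity check: if the cost problem for cost chains reduced to \textsc{PosSLP}, it would lie in the counting hierarchy, which the paper explicitly states as an open problem in its conclusion. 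The paper's actual route is entirely different and much more elementary: it simulates the chain on a probabilistic polynomial-space Turing machine (tracking the accumulated cost only up to the largest constant in~$\varphi$) and invokes the fact that probabilistic \textsc{PSpace} equals \textsc{PSpace} (Simon; alternatively via Ladner's $\#\textsc{PSpace}=\textsc{FPSpace}$).

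The hardness half also has two genuine gaps. First, your self-loop gadget converts $p$ into $1-p^{2^k}$ only by imposing the constraint $x\le 2^k-1$ \emph{locally} on the cost incurred inside that gadget; but a cost chain has a single globally accumulated cost and a single formula evaluated when~$t$ is reached, so once gadgets are composed sequentially their costs mix and nothing lets you constrain each sub-gadget's contribution separately. Making such composition sound requires confining different gadgets' contributions to disjoint ``digits'' of the cost, which is exactly the machinery the paper builds: a DFA whose accepting paths with a prescribed Parikh image count $\mathit{val}(g)$ (Prop.~\ref{lem:parikh-hardness}), typed cost chains, and a base-$(m+1)$ encoding with a check digit to collapse typed costs to a single integer. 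Second, your shift-and-scale encoding $p_g=(v_g+2^{N_g})/2^{N_g+1}$ does not commute with multiplication: $p_{g_1}p_{g_2}$ contains cross terms $v_{g_1}2^{N_{g_2}}+v_{g_2}2^{N_{g_1}}$, so the ``appropriate renormalisation'' after a $\ast$-gate would itself require subtracting exponentially scaled copies of the operand values, which your gadgets do not provide; handling $-$ gates this way is not a routine fix. The paper avoids signed values altogether by first passing to the $+/\ast$ normal form of Etessami--Yannakakis, representing $\mathit{val}(g)/m$ directly as $\P(K_\C=T)$ (sequential composition for $\ast$, a fair branch for $+$), and only at the very end comparing the two designated gates via a coin flip, a cost shift~$H$ obtained from Prop.~\ref{prop-apriori-upper-bound}, and a fixed-shape cost formula whose tail error~$\varepsilon$ is bounded by $1/m$.
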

The remainder of this section is devoted to a proof sketch of this
theorem. Showing membership in \textsc{PSpace} requires non-trivial
results. There is no agreed-upon definition of probabilistic
\textsc{PSpace} in the literature, but we can define it in analogy
to~PP as follows: \emph{Probabilistic \textsc{PSpace}} is the class of
languages~$L$ that have a probabilistic polynomial-space bounded
Turing machine~$M_L$ such that for all words~$x$ one has $x \in L$ if
and only if $M_L$ accepts~$x$ with probability at least~$1/2$. The
cost problem for cost chains is in this class, as can be shown by
adapting the argument from the beginning of the proof sketch for
Thm.~\ref{thm-MC-acyclic}, replacing PP with probabilistic
\textsc{PSpace}. It was first proved in~\cite{Simon77} that
probabilistic \textsc{PSpace} equals \textsc{PSpace}, hence the cost
problem for cost chains is in \textsc{PSpace}.

For the \textsc{PosSLP}-hardness proof
one can assume the following normal form,
 see the proof of~\cite[Thm. 5.2]{EY09rmc}:
there are only $\mathord{+}$ and~$\mathord{\ast}$ operators,
the corresponding gates alternate,
and all gates except those on the bottom level have exactly two incoming edges,
cf.\ the top of~Fig.~\ref{fig-reduction}.
We write $\mathit{val}(g)$ for the value output by gate~$g$.
Then \textsc{PosSLP} asks:
given an arithmetic circuit (in normal form) including gates $g_1, g_2$,
 is $\mathit{val}(g_1) \ge \mathit{val}(g_2)$?

\begin{figure}[t]
  \begin{center}
    \includegraphics[scale=1]{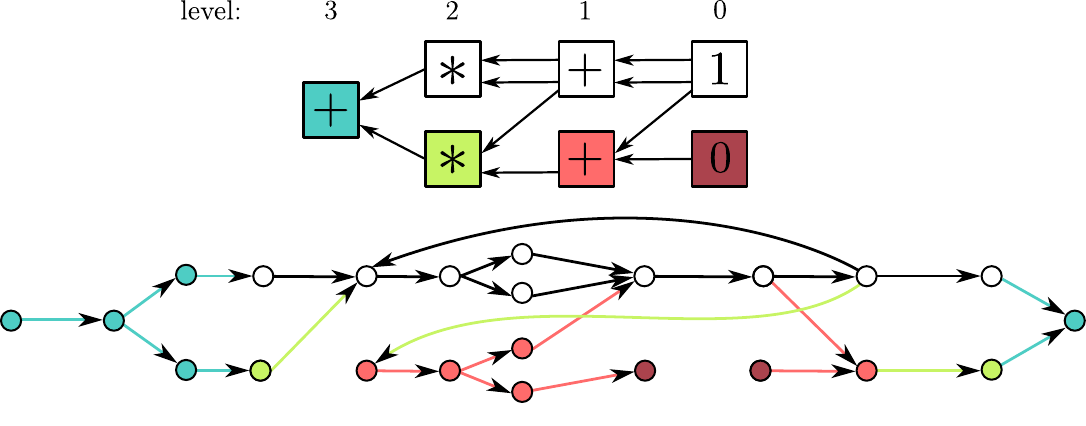}
  \end{center}
  \caption{
Top: an arithmetic circuit in normal form.
Bottom: a DFA (omitting input letters) corresponding to the construction of Prop.~\ref{lem:parikh-hardness}.
Identical colours indicate a correspondence between gates and states.
}
  \label{fig-reduction}
\end{figure}

As an intermediate step of independent interest, we show
\textsc{PosSLP}-hardness of a problem about deterministic finite
automata (DFAs).  Let $\Sigma$ be a finite alphabet and call a
function $f: \Sigma \to \mathbb{N}$ a \emph{Parikh function}. The
\emph{Parikh image} of a word $w\in \Sigma^*$ is the Parikh
function~$f$ such that $f(a)$ is the number of occurrences of~$a$
in~$w$.  We show:
\begin{proposition}\label{lem:parikh-hardness}
Given an arithmetic circuit including gate~$g$,
 one can compute in logarithmic space
 a Parikh function~$f$ (in binary encoding) and a DFA~$\mathcal{A}$
 such that $\mathit{val}(g)$ equals
 the number of accepting computations in~$\mathcal{A}$ that are
 labelled with words that have Parikh image~$f$.
\end{proposition}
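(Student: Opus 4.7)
My plan is to build the DFA $\mathcal{A}$ by bottom-up recursion on the circuit, constructing, for every gate $g$, a sub-DFA $D_g$ over a local alphabet $\Sigma_g$ and a Parikh function $f_g : \Sigma_g \to \mathbb{N}$ such that the number of words accepted by $D_g$ whose Parikh image equals $f_g$ is exactly $\mathit{val}(g)$. The output of the reduction for the designated gate $g$ is then $(\mathcal{A}, f) := (D_g, f_g)$.

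The base case is a leaf. A leaf with value $1$ becomes a single initial--accepting state with $\Sigma_g = \emptyset$; a leaf with value $0$ becomes a state with no accepting transitions. For a $*$-gate $g = g_1 \cdot g_2$, I glue $D_{g_1}$ and $D_{g_2}$ by concatenation, using disjoint child alphabets and one fresh separator letter to preserve determinism. This yields $\mathit{val}(g_1) \cdot \mathit{val}(g_2)$ accepting words, each with the prescribed image $f_{g_1} + f_{g_2}$ (plus one copy of the separator). For a $+$-gate $g = g_1 + g_2$, I use a ``choose-and-pad'' construction with two fresh letters $a_g, b_g$: one branch of $D_g$ reads $a_g$, simulates $D_{g_1}$, and then traverses a deterministic spine emitting exactly the letters required by $f_{g_2}$; the symmetric branch reads $b_g$, traverses a spine emitting $f_{g_1}$, and then simulates $D_{g_2}$. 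The spines equalise the Parikh images across both branches, so the common image $f_g = f_{g_1} + f_{g_2} + \mathbf{1}_{a_g} + \mathbf{1}_{b_g}$ is realised by exactly $\mathit{val}(g_1) + \mathit{val}(g_2)$ accepting words. Determinism is preserved because $a_g, b_g$ are fresh and the spines are linear.

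The main obstacle is that the circuit is a DAG, not a formula: a subgate $h$ may feed into several parents, and naive recursion would copy $D_h$ once per occurrence in the tree unfolding, blowing up the DFA exponentially. To keep the DFA polynomial, $D_h$ must be shared across parents. The key trick is to loop the exit of each shared $D_h$ back to its entry via designated letters and to set $f$ so that the multiplicity of the loop letter equals the visit count $\mu(h)$ of $h$ in the unfolding---an integer of magnitude at most $2^{O(|C|)}$, which still has a polynomial-length binary encoding. Each pass through $D_h$ then contributes an independent factor of $\mathit{val}(h)$, so the accepting words of $\mathcal{A}$ with image $f$ reconstruct precisely the $\mathit{val}(g_0)$ proof trees of the unfolded circuit. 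The most delicate step will be to verify that the Parikh constraint excludes all ``cross-matched'' or shuffled accepting words, and to check that the assignment of loop letters and bookkeeping counts can be produced in logspace, yielding a DFA with $O(|C|)$ states.
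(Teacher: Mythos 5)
Your gadgets for the tree case are essentially the paper's (a ``$+$''-gate becomes branching, a ``$\ast$''-gate becomes sequential composition with fresh separator letters), but the crux of the statement is reconvergent fan-in in the DAG, and there your mechanism is both unresolved and, as stated, wrong. Once you pad untaken ``$+$''-branches with spines that emit the letters of the other child, the number of genuine passes through a shared sub-DFA $D_h$ is \emph{not} the visit count $\mu(h)$ of $h$ in the unfolding: it equals the number of occurrences of $h$ that are active under the choices made at the ``$+$''-gates above, which varies from proof tree to proof tree. Pinning the loop letter of $D_h$ to multiplicity $\mu(h)$ therefore miscounts; e.g.\ for $g_0=h+h+h$ with $h=1+1$ one has $\mathit{val}(g_0)=6$, but any accepting run makes one real pass through $D_h$, not $\mu(h)=3$, and ``each pass contributes an independent factor $\mathit{val}(h)$'' would give $\mathit{val}(h)^{\mu(h)}=8$. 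That claimed independence is precisely the cross-matching statement that needs proof: the Parikh condition fixes only the \emph{total} count of each letter, and your padding spines re-emit the very letters used inside $D_h$ (with exponentially large multiplicities, so the spines must themselves be cycles pinned only by the same global constraint), so excluding words that trade letters between a spine and a pass through $D_h$, or that exit a shared $D_h$ into the wrong parent's continuation, is the heart of the matter --- and you explicitly defer it. A further obstacle to the logspace claim: $\mu(h)$ is a path count in a DAG, a \#L-hard quantity not known to be computable in deterministic logspace, so even the bookkeeping data of your reduction is problematic.

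The paper avoids per-gate counts altogether. It first normalises the circuit (alternating $+$/$\ast$ levels, fan-in two, following the proof of Thm.~5.2 of~\cite{EY09rmc}) and then builds the DFA level by level: each gate $v$ on a level gets three fresh letters $a_v,b_v,c_v$ required once each, entering $v$'s gadget forces a ``tax'' transition consuming the sibling gates' letters of that level, shared sub-DFAs of the level below are reused in place (no copying, which is why cycles appear at $\ast$-gates), and the required multiplicities of lower-level letters are simply doubled at every multiplication level. Thus the Parikh function consists only of powers of two read off the level index, trivially logspace-computable, and the level structure plus the per-level letters is what rules out the shuffled runs your proposal leaves unhandled. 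To repair your proof you would need either this normalisation-plus-levelling or a concrete replacement for the $\mu(h)$-loop idea together with an actual argument excluding cross-matched accepting words.
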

%
The construction is illustrated in Fig.~\ref{fig-reduction}.
It is by induction on the levels of the arithmetic circuit.
A gate labelled with ``+'' is simulated by \emph{branching}
into the inductively constructed gadgets corresponding to the gates
this gate connects to. Likewise, a gate labelled with ``$\ast$'' is
simulated by \emph{sequentially composing} the gadgets corresponding
to the gates this gate connects to.
It is the latter case that may introduce cycles in the structure of the DFA.
Building on this construction,
by encoding alphabet letters in natural numbers encoded in binary,
we then show:
\begin{proposition}\label{prop-SLPtoMC}
Given an arithmetic circuit including gate~$g$ on odd level~$\ell$,
 one can compute in logarithmic space 
 a cost process~$\C$ and $T \in \N$ with $\P(K_\C = T) = \mathit{val}(g)/m$, 
 where $m = \exp_2(2^{(\ell-1)/2+1}-1)\cdot \exp_d(2^{(\ell-1)/2+1}-3)$.
\end{proposition}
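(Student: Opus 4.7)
The plan is to compose Prop.~\ref{lem:parikh-hardness} with a carry-free base-$d$ encoding of the alphabet, so that counting accepting runs with a prescribed Parikh image becomes counting runs that accumulate a prescribed cost. First, I apply Prop.~\ref{lem:parikh-hardness} to the gate~$g$ and obtain, in logarithmic space, a DFA $\mathcal{A}$ over an alphabet $\Sigma = \{a_1,\ldots,a_n\}$ and a Parikh function~$f$ (in binary) such that $\mathit{val}(g)$ equals the number of accepting runs of $\mathcal{A}$ whose label has Parikh image~$f$.

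Second, I pick a base~$d$ strictly larger than $\max_i f(a_i)$, which is readable from the binary encoding of~$f$ in logspace. I assign cost $c(a_i) := d^{\,i-1}$ to each letter and set $T := \sum_{i=1}^n f(a_i)\, d^{\,i-1}$. The bound on~$d$ rules out carries, so the total cost along a labelled run equals~$T$ if and only if its Parikh image is~$f$. I then turn $\mathcal{A}$ into a cost chain~$\C$ by keeping the states of $\mathcal{A}$ as control states, replacing each letter transition by a probabilistic transition with cost $c(a)$ and probability $1/k$ where $k$ is the out-degree, routing accepting states to the fresh target~$t$ at cost~$0$, and rerouting any non-accepting dead-ends to~$t$ with a distinguished cost that cannot combine with the letters to yield~$T$. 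By inspection of the Prop.~\ref{lem:parikh-hardness} construction these reroutings are harmless, and assumptions~(i)--(ii) on cost processes are immediate.

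The quantitative core is the claim that every maximal run of~$\C$ has the same probability $1/m$ with~$m$ as stated. I plan to prove this by induction on the levels of the circuit, exploiting the recursive shape of the gadgets from Prop.~\ref{lem:parikh-hardness}: each ``$+$''-gate introduces a binary branch, contributing a factor of~$2$, while each ``$\ast$''-gate composes its sub-gadgets sequentially and effectively enlarges the alphabet used downstream, contributing a factor of~$d$. The odd-level hypothesis on~$g$ then pins down the exponents, delivering the nested closed form in the statement. Since exactly $\mathit{val}(g)$ of the $m$ equiprobable maximal paths realise the cost~$T$, the identity $\P(K_\C = T) = \mathit{val}(g)/m$ follows.

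The main obstacle will be the simultaneous bookkeeping behind~$m$: I must verify that all maximal paths have the same length and the same branching signature, that the same carry-free encoding continues to separate Parikh images when gadgets are nested, and that the ``$\ast$''-composition step neither produces ``leaks'' (runs that bypass the designated accepting states) nor inflates the per-gadget branching beyond the claimed tower. A secondary subtlety is the logspace bound, since~$d$ and~$T$ can be doubly exponentially large; however, both have only polynomially many bits and can be written to the output tape gate by gate as the circuit is traversed.
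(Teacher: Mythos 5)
Your overall plan (compose Prop.~\ref{lem:parikh-hardness} with a positional encoding of letters into binary-encoded costs, then do an induction over circuit levels) is the paper's route, but two of your key claims are genuinely wrong, not just unproved. First, the carry-free encoding: choosing a base $d>\max_i f(a_i)$ and cost $c(a_i)=d^{\,i-1}$ only guarantees that the digits of $T$ are the values $f(a_i)$; it does not bound how often a letter occurs along a run of~$\mathcal{A}$. The DFA of Prop.~\ref{lem:parikh-hardness} is \emph{cyclic} (sequential composition at ``$\ast$''-gates creates cycles, e.g.\ when a gate's two predecessors coincide), so a run may use a low-indexed letter $d$ extra times and thereby simulate, via a carry, one occurrence of the next letter: such a run has cost exactly~$T$ but the wrong Parikh image, and your reduction would count it. The paper avoids this with the check-digit homomorphism of Lem.~\ref{lem:typed-costs-to-naturals}: costs are first kept as vectors (typed cost chains), and the encoding into $\N$ uses base $b=1+\sum_a f(a)$ together with a most-significant digit recording the \emph{total} number of letter occurrences; equality of that digit forces the total multiplicity to stay below~$b$, which is what rules out carries. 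Some device of this kind is indispensable, and your proposal has none.

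Second, the probabilistic accounting: because of the cycles there are infinitely many maximal runs, of different lengths and different probabilities, so the claim that ``every maximal run has probability $1/m$'' and that exactly $\mathit{val}(g)$ of $m$ equiprobable paths realise cost~$T$ cannot be made true; the uniformity check you defer to the bookkeeping step is precisely what fails. The paper instead (a) adds extra ``error''-labelled edges from every $\mathit{out}(\mathcal{A}_w)$ to the target so that these states have uniform out-degree $d=|V|+1$ and $t$ is reached almost surely (your rerouting of ``dead-ends'' does not achieve this normalisation), and (b) proves by induction on levels the identity $p_w=\mathit{val}(w)/m(i)$, where $p_w$ is the probability of reaching $\mathit{out}(\mathcal{A}_w)$ from $\mathit{in}(\mathcal{A}_w)$ with exactly the prescribed accumulated (typed) cost; a ``$+$''-gate contributes $\frac{1}{2d}\bigl(p_u+p_{u'}\bigr)$ and a ``$\ast$''-gate contributes $\frac{1}{d^2}\,p_u\,p_{u'}$ together with a squaring of the inductive denominator, which is where the stated doubly exponential~$m$ comes from. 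Your per-gate factors of $2$ and $d$ do not reproduce this tower, so even the closed form for~$m$ would not come out of your argument as written.
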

Towards the \textsc{PosSLP} lower bound from Thm.~\ref{thm-MC-cyclic},
given an arithmetic circuit including gates $g_1, g_2$,
 we use Prop.~\ref{prop-SLPtoMC} to construct two cost chains
 $\C_1 = (Q, q_1, t, \Delta)$ and $\C_2 = (Q, q_2, t, \Delta)$ and
 $T_1,T_2 \in \N$ such that $\P(K_{\C_i} = T_i) = \mathit{val}(g_i)/m$
 holds for $i \in \{1,2\}$ and for $m \in \N$ as in Prop.~\ref{prop-SLPtoMC}.
Then we compute a number $H \ge T_2$ such that $\P(K_{\C_2} > H) < 1/m$.
The representation of~$m$ from Prop.~\ref{prop-SLPtoMC} is of exponential size.
However, using Prop.~\ref{prop-apriori-upper-bound},
 $H$ depends only logarithmically on~$m+1$.
We combine $\C_1$ and~$\C_2$ to a cost chain $\C = (Q \uplus \{q_0\},
q_0, t, \widetilde\Delta)$, where $\widetilde\Delta$ extends~$\Delta$
by $\widetilde\Delta(q_0)(q_1,H+1) = 1/2$ and
$\widetilde\Delta(q_0)(q_2,0) = 1/2$.  By this construction, the new
cost chain~$\C$ initially either incurs cost~$H+1$ and then
emulates~$\C_1$, or incurs cost~$0$ and then emulates~$\C_2$.  Those
possibilities have probability $1/2$ each.

Finally, we compute a suitable cost formula~$\varphi$ such that we
have $\mathit{val}(g_1) \ge \mathit{val}(g_2)$ if and only if $\P(K_\C
\models \varphi) \ge 1/2$, completing the logspace reduction.  We
remark that the structure of the formula~$\varphi$, in particular the
number of inequalities, is fixed.  Only the involved numbers depend on
the concrete instance.


\section{Cost Processes} \label{sec-MDP}
\paragraph{Acyclic Cost Processes.} \label{sub-MDP-acyclic}
We now prove that the cost problem for acyclic cost processes is
\textsc{PSpace}-complete. The challenging part is to show that
\textsc{PSpace}-hardness even holds for \emph{atomic} cost
formulas. For our lower bound, we reduce from a generalisation of the
classical \textsc{SubsetSum} problem: Given a tuple $(k_1, \ldots,
k_n, T)$ of natural numbers with $n$ even, the
\emph{\textsc{QSubsetSum} problem} asks whether the following formula
is true:
 \[
  \exists x_1 \in \{0,1\} \ \forall x_2 \in \{0,1\} \ \cdots \ \exists x_{n-1} \in \{0,1\} \ \forall x_n \in \{0,1\} \ : \ \sum_{1\le i\le n} x_i k_i = T
 \]
Here, the quantifiers $\mathord{\exists}$ and~$\mathord{\forall}$
occur in strict alternation. It is shown in~\cite[Lem.~4]{Travers06}
that \textsc{QSubsetSum} is \textsc{PSpace}-complete. One can think of
such a formula as a turn-based game, the \emph{\textsc{QSubsetSum}
  game}, played between Player Odd and Player Even.  If $i \in \{1,
\ldots, n\}$ is odd (even), then turn~$i$ is Player Odd's (Player
Even's) turn, respectively.  In turn~$i$ the respective player decides
to either \emph{take}~$k_i$ by setting $x_i=1$, or \emph{not} to
\emph{take}~$k_i$ by setting $x_i=0$.  Player Odd's objective is to
make the sum of the taken numbers equal~$T$, and Player Even tries to
prevent that.  If Player Even is replaced by a random player, then
Player Odd has a strategy to win with probability~$1$ if and only if
the given instance is a ``yes'' instance for \textsc{QSubsetSum}.
This gives an easy \textsc{PSpace}-hardness proof for the cost problem
with non-atomic cost formulas $\varphi \equiv (x=T)$.  In order to
strengthen the lower bound to atomic cost formulas $\varphi \equiv (x
\le B)$ we have to give Player Odd an incentive to take numbers~$k_i$,
although she is only interested in not exceeding the budget~$B$. This
challenge is addressed in our \textsc{PSpace}-hardness proof.

The \textsc{PSpace}-hardness result reflects the fact that the optimal
strategy must take the current cost into account,
 not only the control state, even for atomic cost formulas.
This may be somewhat counter-intuitive, as a good strategy
should always ``prefer small cost''.  But if there always existed a
strategy depending \emph{only} on the control state, one could guess
this strategy in~NP and invoke the PP-result of
Sec.~\ref{sub-MC-acyclic} in order to obtain an NP$^\text{PP}$
algorithm, implying NP$^\text{PP}$ = \textsc{PSpace} and hence a
collapse of the counting hierarchy.

Indeed, for a concrete example, consider the acyclic cost process with $Q =
\{q_0, q_1, t\}$, and $\En(q_0) = \{a\}$ and $\En(q_1) = \{a_1,
a_2\}$, and $\Delta(q_0,a)(q_1,+1) = \frac12$ and
$\Delta(q_0,a)(q_1,+3) = \frac12$ and $\Delta(q_1,a_1)(t,+3) = 1$ and
$\Delta(q_1,a_2)(t,+6) = \frac12$ and $\Delta(q_1,a_2)(t,+1) =
\frac12$.
Consider the atomic cost formula $\varphi \equiv (x \le 5)$.
An optimal scheduler~$\sigma$ plays~$a_1$ in~$(q_1,1)$ and $a_2$
in~$(q_1,3)$, because additional cost~$3$, incurred by~$a_1$, is fine
in the former but not in the latter configuration.  For this
scheduler~$\sigma$ we have $\P_\sigma(K \models \varphi) = \frac34$.

\newcommand{\stmtthmMDPacyclic}{
 The cost problem for acyclic cost processes is in \textsc{PSpace}.
 It is \textsc{PSpace}-hard, even for atomic cost formulas.
}
\begin{theorem} \label{thm-MDP-acyclic}
\stmtthmMDPacyclic
\end{theorem}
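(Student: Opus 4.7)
The plan for the PSpace upper bound is to evaluate the Bellman equation
$V(q,c) = \max_{a \in \En(q)} \sum p \cdot V(q', c+k)$,
where the sum ranges over triples $(q', k, p)$ with $\Delta(q,a)(q',k) = p > 0$, together with the base case $V(t,c) = 1$ if $c \models \varphi$ and $0$ otherwise; the answer is $V(q_0, 0) \ge \tau$. Since $\C$ is acyclic, a depth-first recursive evaluation of $V(q_0, 0)$ has stack depth at most $|Q|$, and each frame stores a control state, the accumulated cost $c$ (of polynomial bit-size, as it is the sum of at most $|Q|$ polynomially-sized costs), and a partial-sum accumulator. Every intermediate value $V(q,c)$ is a rational with denominator dividing $D^{|Q|}$, where $D$ is the least common multiple of the transition-probability denominators; since $D$ has polynomial bit-size, so does each $V(q,c)$. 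Hence the whole computation fits in polynomial space.

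For the PSpace lower bound I reduce from \textsc{QSubsetSum}. The basic skeleton is a chain of $n$ gadgets, one per bit $x_i$: for odd~$i$ the scheduler picks between two actions contributing either $k_i$ or $0$ to the cost (Odd's choice); for even~$i$ a fair coin flip contributes $k_i$ or $0$ (simulating Player Even). The accumulated cost at the target is then $S = \sum_i x_i k_i$, and with the non-atomic formula $\varphi \equiv (x = T)$ and threshold $\tau = 1$, Odd has a scheduler with $\P_\sigma(K \models \varphi) = 1$ if and only if the \textsc{QSubsetSum} instance is positive. This already establishes PSpace-hardness, but only for non-atomic formulas.

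The hard part is strengthening the reduction to atomic formulas $\varphi \equiv (x \le B)$, where lower cost is always preferable and Odd has no intrinsic reason to take any $k_i$. My plan is to append to the game skeleton a post-game sub-process in which Odd, now able to read $S$ off the accumulated cost, must choose an action that ``bets'' on the value of $S$; each such action triggers a carefully calibrated distribution of additional cost, so that $\P_\sigma(K \le B)$ reaches a specific threshold $\tau$ exactly when Odd plays a winning \textsc{QSubsetSum} strategy (forcing $S = T$ against every Even response) \emph{and} correctly identifies $S = T$ in the post-process. The main technical obstacle is calibrating this sub-process, with polynomially many transitions of polynomially large numbers, so that the threshold is hit exactly when Odd wins and that no non-winning strategy can reach $\tau$ by exploiting slack in the continuation. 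The construction realises the phenomenon foreshadowed by the small example preceding the theorem: the optimal scheduler in an acyclic cost process may be forced to consult the current accumulated cost, not only the current control state.
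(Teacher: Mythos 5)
Your \textsc{PSpace} upper bound is fine and is essentially the paper's own argument (recursive evaluation of the Bellman recursion with stack depth at most $|Q|$, and all intermediate probabilities representable with polynomially many bits because their denominators divide a product of the input denominators). The genuine gap is in the lower bound, and it is not just the ``calibration'' you defer: the architecture itself --- play the \textsc{QSubsetSum} skeleton first, then append a post-game betting gadget that is supposed to reward exactly the accumulated cost corresponding to $S=T$ --- cannot work for an atomic formula $x \le B$. As described, all plays enter the post-game sub-process in the same control state, so the only information available there is the accumulated cost $s$. For any fixed action $a$ of that gadget, the probability that $s$ plus the (non-negative) additional cost stays $\le B$ is a non-increasing function of $s$, and taking the maximum over the finitely many available actions preserves monotonicity. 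Hence the optimal continuation value $g(s)$ satisfies $g(s) \ge g(s')$ whenever $s \le s'$: undershooting $T$ is never penalised, so the gadget can only encode ``keep $S$ stochastically small'', never ``hit $T$ exactly''. Concretely, for the ``no'' instance $k_1=5$, $k_2=1$, $T=3$, the scheduler that takes nothing ends the skeleton with $S \in \{0,1\}$ in every branch, so its value $\tfrac12\bigl(g(0)+g(1)\bigr)$ is at least $g(3)$, the value of the intended winning behaviour; whatever threshold a ``yes'' instance is supposed to reach via forcing $S=T$, this trivial strategy reaches it too, and the reduction gives the wrong answer.

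The paper's proof avoids this by interleaving the incentive with the game instead of appending it. From each $q_i$, after Odd's action $a_j$ and Even's coin flip, the incurred cost $k$ is charged and, in addition, with small probability $k/M$ the process jumps immediately to $t$ --- a ``win'' precisely if the accumulated cost is still within the budget --- while reaching $q_n$ forces cost $B+1$, a sure loss. Thus the winning probability is, up to second-order terms controlled by a union-bound/inclusion--exclusion estimate, proportional to the cost accumulated while staying within budget, so the uniquely optimal behaviour is to drive the accumulated cost to exactly $B = (n/2)\ell + T$, which is possible with probability $1$ if and only if Odd wins the \textsc{QSubsetSum} game. The surcharge $\ell = 1 + n\kmax$ added to every transition prevents the budget from being exhausted before the last decision, and the choices $M = 2^{n/2} n^2 \ell^2$ and $\tau = \bigl(B - \tfrac12 \cdot 2^{-n/2}\bigr)/M$ make the two cases separate. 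To repair your write-up you need an incentive mechanism of this interleaved kind, entangled with the cost accumulation itself; no terminal gadget that only reads the final accumulated cost can do the job.
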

\newcommand{\Opt}{\textsc{Opt}}
\begin{proof}[sketch]
To prove membership in \textsc{PSpace}, we consider a procedure
\Opt\ that, given $(q,c) \in Q \times \N$ as input, computes the
optimal (i.e., maximised over all schedulers) probability~$p_{q,c}$
that starting from~$(q,c)$ one reaches~$(t,d)$ with $d \models
\varphi$.  The following procedure characterisation of~$p_{q,c}$ for
$q \ne t$ is crucial for \Opt$(q,c)$:
 \[
  p_{q,c} = \max_{a \in \En(q)} \sum_{q' \in Q}
  \sum_{k \in \N} \Delta(q,a)(q',k) \cdot p_{q',c+k}
 \]
 So \Opt$(q,c)$ loops over all $a \in \En(q)$ and all $(q',k) \in Q
 \times \N$ with $\Delta(q,a)(q',k) > 0$ and recursively computes
 $p_{q',c+k}$.  Since the cost process is acyclic, the height of the
 recursion stack is at most~$|Q|$.  The representation size of the
 probabilities that occur in that computation is polynomial.  To see
 this, consider the product~$D$ of the denominators of the
 probabilities occurring in the description of~$\Delta$.  The encoding
 size of~$D$ is polynomial.  All probabilities occurring during the
 computation are integer multiples of~$1/D$.  Hence
 computing~$\Opt(q_0,0)$ and comparing the result with~$\tau$ gives a
 \textsc{PSpace} procedure.

 For the lower bound we reduce the \textsc{QSubsetSum} problem,
 defined above, to the cost problem for an atomic cost formula $x \le
 B$.  Given an instance $(k_1, \ldots, k_n, T)$ with $n$ is even of
 the \textsc{QSubsetSum} problem, we construct an acyclic cost process
 $\C = (Q, q_0, t, A, \En, \Delta)$ as follows.  We take $Q = \{q_0,
 q_2, \ldots, q_{n-2}, q_n, t\}$.  Those control states reflect pairs
 of subsequent turns that the \textsc{QSubsetSum} game can be in.  The
 transition rules~$\Delta$ will be set up so that probably the control
 states $q_0, q_2, \ldots, q_n, t$ will be visited in that order, with
 the (improbable) possibility of shortcuts to~$t$.  For even~$i$ with
 $0 \le i \le n{-}2$ we set $\En(q_i) = \{a_0, a_1\}$.  These actions
 correspond to Player Odd's possible decisions of not taking,
 respectively taking~$k_{i+1}$.  Player Even's response is modelled by
 the random choice of not taking, respectively taking~$k_{i+2}$ (with
 probability $1/2$ each).  In the cost process, taking a number~$k_i$
 corresponds to incurring cost~$k_i$.  We also add an additional
 cost~$\ell$ in each transition.%
 \footnote{ This is for technical reasons. Roughly speaking, this
   prevents the possibility of reaching the full budget~$B$ before an
   action in control state~$q_{n{-}2}$ is played.}  Therefore we
 define our cost problem to have the atomic formula $x \le B$ with $B
 := (n/2) \cdot \ell + T$. For some large number $M \in \N$, formally
 defined \iftechrep{in App.~\ref{app-MDP}}{in~\cite{HK-ICALP15-TR}}, we set for all even $i \le n{-}2$ and
 for $j \in \{0,1\}$:
 \begin{align*}
  \Delta(q_i,a_j)(q_{i+2},\ell+j \cdot k_{i+1})           &= (1/2) \cdot \left(1 - (\ell+j \cdot k_{i+1})/M \right) \\
  \Delta(q_i,a_j)(t,\ell+j \cdot k_{i+1})                 &= (1/2) \cdot (\ell+j \cdot k_{i+1})/M \\
  \Delta(q_i,a_j)(q_{i+2},\ell+j \cdot k_{i+1}+k_{i+2}) &= (1/2) \cdot \left(1 - (\ell+j \cdot k_{i+1}+k_{i+2})/M \right) \\
  \Delta(q_i,a_j)(t,\ell+j \cdot k_{i+1}+k_{i+2})       &= (1/2) \cdot (\ell+j \cdot k_{i+1}+k_{i+2})/M%
 \intertext{%
   So with high probability the MDP transitions from~$q_i$ to~$q_{i+2}$,
  and cost $\ell$, $\ell+k_{i+1}$, $\ell+k_{i+2}$, $\ell+k_{i+1}+k_{i+2}$ is incurred,
  depending on the scheduler's (i.e., Player Odd's) actions and on the random (Player Even) outcome.
 But with a small probability, which is proportional to the incurred cost,
  the MDP transitions to~$t$, which is a ``win'' for the scheduler
   as long as the accumulated cost is within budget~$B$.
 We make sure that the scheduler loses if $q_n$ is reached:
 }
  \Delta(q_n,a)(t,B{+}1) &= 1 \qquad \text{with $\En(q_n) = \{a\}$}
 \end{align*}
The MDP is designed so that the scheduler probably ``loses'' (i.e., exceeds the budget~$B$);
 but whenever cost~$k$ is incurred, a winning opportunity with probability~$k/M$ arises.
Since $1/M$ is small, the overall probability of winning is approximately $C/M$ if total cost $C \le B$ is incurred.
In order to maximise this chance, the scheduler wants to maximise the total cost without exceeding~$B$,
 so the optimal scheduler will target~$B$ as total cost.

The values for $\ell$, $M$ and~$\tau$ need to be chosen carefully, as
the overall probability of winning is not exactly the sum of the
probabilities of the individual winning opportunities.  By the ``union
bound'', this sum is only an upper bound, and one needs to show that
the sum approximates the real probability closely enough. \qed
\end{proof}

\paragraph{General Cost Processes.}
\label{sub-MDP-general}
We show the following theorem:
\newcommand{\stmtthmMDPgeneral}{
 The cost problem is \textsc{EXP}-complete.
}
\begin{theorem} \label{thm-MDP-general}
\stmtthmMDPgeneral
\end{theorem}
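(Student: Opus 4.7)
Since Prop.~\ref{fact-EXPTIME-upper} gives the EXP upper bound, the task is to establish EXP-hardness. My plan is to extend the QSubsetSum-based reduction of Thm.~\ref{thm-MDP-acyclic} (which proved PSpace-hardness via polynomially-many quantifier alternations) to a variant with exponentially-many alternations, using cycles in the cost process to iterate a round-gadget exponentially often, analogously to how Prop.~\ref{prop-SLPtoMC} used cycles to express doubly-exponential values.

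Concretely, the source of the reduction is the following succinct variant of QSubsetSum: given $n \in \N$, a target $T \in \N$ in binary, and a polynomial-time Turing machine~$M$ that on input $i \in \{0,1\}^n$ outputs a value $k_i \in \N$ in binary, decide whether
\[
  \exists x_1\,\forall x_2\,\cdots\,\exists x_{2^n-1}\,\forall x_{2^n}\ :\ \sum_{i=1}^{2^n} x_i k_i = T.
\]
By a routine padding of \cite[Lem.~4]{Travers06}, this succinct problem is EXP-complete. The cost process I construct has a cyclic control graph whose main loop executes one round per $i \in \{1,\dots,2^n\}$. The cost encodes, in disjoint high/low ``digit positions'' of its binary expansion, the current round index~$i$, the partial sum $\sum_{j<i} x_j k_j$, and scratch space for on-the-fly computation of $k_i$ from $i$ via an arithmetic-circuit simulation in the spirit of Prop.~\ref{prop-SLPtoMC}. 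The parity of~$i$ selects whether the scheduler (odd~$i$) or a uniform coin-flip (even~$i$) determines $x_i \in \{0,1\}$; picking $x_i = 1$ adds $k_i$ to the sum-digits, picking $x_i = 0$ leaves them untouched. After $2^n$ rounds, a final gadget uses the cost-proportional ``leakage-to-$t$'' idea from the proof of Thm.~\ref{thm-MDP-acyclic} to translate ``sum-digits equal $T$'' into an atomic-cost-formula budget condition, yielding a $\tau = 1/2$ instance.

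The principal technical obstacle is the disciplined bookkeeping required to maintain three exponentially-large quantities inside a single monotonically-increasing cost without cross-contamination between their digit positions, while also synchronising the loop-counter digits with the control-state's progress through the inner $M$-simulation. Synchronisation can be enforced by periodic ``digit-checks'' at fixed control-state milestones, again implemented via multiplicative gadgets as in Prop.~\ref{prop-SLPtoMC}; any deviation by the scheduler from the intended digit layout is penalised by pushing the cost out of the narrow winning interval carved out by the final leakage gadget. Putting these pieces together yields a logspace reduction from the succinct QSubsetSum problem to the cost problem for general cost processes, witnessing EXP-hardness.
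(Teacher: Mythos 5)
There is a genuine gap, in fact two. First, your source problem is not known to be EXP-hard, and the justification you offer does not work: Travers's reduction (and subset-sum-style reductions generally) encodes machine configurations into the \emph{digits} of the numbers $k_i$, so applying it to an exponentially padded instance produces numbers with exponentially many bits. That is incompatible with your requirement that a polynomial-time machine output $k_i$ from the index $i$ (which is exactly what keeps your succinct game inside \textsc{EXP} in the first place), so ``routine padding'' gives you nothing; establishing \textsc{EXP}-hardness of an exponentially long, strictly alternating take-it-or-leave-it sum game with only a poly-bit ``board'' is a substantial new hardness proof, comparable to the countdown-game result of~\cite{JurdzinskiSL08} -- which is precisely what the paper reduces from instead, using the non-atomic formula $x = T$ and threshold $\tau = 1$, plus a stop action and an acyclic binary top-up phase to guarantee almost-sure reachability of~$t$.

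Second, and more seriously, your final step -- converting ``sum-digits equal $T$'' into an \emph{atomic} budget condition with $\tau = 1/2$ via the leakage gadget of Thm.~\ref{thm-MDP-acyclic} -- cannot be carried out by a polynomial-size (let alone logspace) reduction once there are exponentially many rounds. In the acyclic proof the parameters are tuned to accuracy roughly $2^{-n/2}/M$ with $M = 2^{n/2}n^2\ell^2$; with $2^n$ rounds the inclusion--exclusion error control forces $M$ and $\tau$ to have exponentially many bits, i.e.\ they cannot even be written into the instance. The paper's conclusion states exactly this obstruction (paths of exponential length have doubly-exponentially small probabilities, so the technique of Thm.~\ref{thm-MDP-acyclic} would need thresholds of exponential representation size) and explicitly leaves \textsc{EXP}-hardness for atomic cost formulas open; your sketch would resolve that open problem as a by-product, which should have been a warning sign. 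The remaining machinery -- maintaining round index, partial sum and ``scratch'' digits in a single monotonically increasing counter and enforcing a correct simulation of $M$ by ``digit-checks via multiplicative gadgets as in Prop.~\ref{prop-SLPtoMC}'' -- is also unsubstantiated: Prop.~\ref{prop-SLPtoMC} relates circuit values to hitting probabilities in cost \emph{chains}; it provides no mechanism for forcing an adversarial scheduler in an MDP to perform a prescribed computation, and no argument is given that deviations are actually punished under an atomic formula. As it stands the proposal does not yield a proof of Thm.~\ref{thm-MDP-general}.
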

\begin{proof}[sketch]
The \textsc{EXP} upper bound was stated in
Prop.~\ref{fact-EXPTIME-upper}.  Regarding hardness, we build upon
\emph{countdown games}, ``a simple class of turn-based 2-player games with discrete timing'' \cite{JurdzinskiSL08}.
Deciding the winner in a countdown game is
\textsc{EXP}-complete~\cite{JurdzinskiSL08}. Albeit non-stochastic,
countdown games are very close to our model: two players move along
edges of a graph labelled with positive integer weights and thereby
add corresponding values to a succinctly encoded counter. Player~1's
objective is to steer the value of the counter to a given number $T
\in \N$, and Player~2 tries to prevent that.  Our reduction from
countdown games \iftechrep{in App.~\ref{app-MDP}}{in~\cite{HK-ICALP15-TR}} requires a small trick, as in
our model the final control state~$t$ needs to be reached with
probability~1 regardless of the scheduler, and furthermore, the
scheduler attempts to achieve the cost target~$T$ when and only when
the control state $t \in Q$ is visited.
\qed
\end{proof}

\paragraph{The Cost-Utility Problem.}
MDPs with \emph{two} non-negative and non-decreasing integer counters,
viewed as cost and utility, respectively, were considered in
\cite{BDDKKquantiles2014,BDKquantiles2014}. Specifically, those works
consider problems such as computing the minimal cost~$C$ such that the
probability of gaining at least a given utility~$U$ is at
least~$\tau$.  Possibly the most fundamental of those problems is the
following: the \emph{cost-utility problem} asks, given an MDP with
both cost and utility, and numbers $C, U \in \N$, whether one can,
with probability~1, gain utility at least~$U$ using cost at most~$C$.
Using essentially the proof of Thm.~\ref{thm-MDP-general} we show:
\newcommand{\stmtcorcostutility}{
The cost-utility problem is \textsc{EXP}-complete.
}
\begin{corollary} \label{cor-cost-utility}
\stmtcorcostutility
\end{corollary}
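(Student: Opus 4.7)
For the upper bound I would lift the exponential-state procedure behind Prop.~\ref{fact-EXPTIME-upper} by simultaneously tracking the accumulated cost (capped at $C$) and the accumulated utility (capped at $U$). This yields a finite MDP of size polynomial in $|Q|\cdot(C{+}1)\cdot(U{+}1)$, hence exponential in the binary input. On that MDP the question ``cost $\le C$ and utility $\ge U$ upon reaching $t$ with probability~$1$'' is a purely qualitative reachability problem, solvable in polynomial time by the standard end-component based algorithm.

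For the lower bound I plan to reuse the countdown-game reduction from the proof of Thm.~\ref{thm-MDP-general}. Recall that in a countdown game two players alternately traverse weighted edges and Player~1 tries to drive the counter to exactly a target value~$T$. I would turn such a game into an MDP with cost \emph{and} utility by making Player~1's vertices scheduler-controlled, Player~2's vertices stochastic with uniform distribution over outgoing edges, and assigning to each edge of weight $w$ both cost~$w$ and utility~$w$. Setting the query bounds to $C=U=T$, the conjunctive constraint ``cost $\le T$ and utility $\ge T$'' collapses to ``accumulated cost $=$ accumulated utility $= T$'', which is precisely Player~1's winning condition in the countdown game. Any Player~2 response keeping the counter away from~$T$ translates into a positive-probability path violating the cost-utility objective, so probability~$1$ is achievable iff Player~1 has a winning strategy.

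The one delicate point, already present in the proof of Thm.~\ref{thm-MDP-general}, is reconciling the scheduler's strategic choices with the almost-sure reachability of~$t$ demanded by assumption~(ii) on cost processes. I would import exactly the same trick: give the scheduler a dedicated ``declare victory'' action leading to~$t$ with zero additional cost and utility, and add forced escape transitions to~$t$ that are too expensive (or too poor in utility) to be useful, so that $t$ is reached under every scheduler but can be profitably reached only exactly when the counter has hit~$T$. I expect the main obstacle to be checking that these auxiliary gadgets do not create any new probability-$1$ winning positions beyond those of Player~1 in the original countdown game; since this verification has already been carried out for the single-counter setting in Thm.~\ref{thm-MDP-general}, only a routine adaptation should be required.
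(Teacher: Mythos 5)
Your route is essentially the paper's: the upper bound is the two-dimensional version of the pseudo-polynomial construction behind Prop.~\ref{fact-EXPTIME-upper}, and the lower bound is obtained from the countdown-game reduction of Thm.~\ref{thm-MDP-general} by letting utility increase exactly as cost does, so that ``cost at most $T$ and utility at least $T$ with probability~$1$'' collapses to $\P_\sigma(K=T)=1$ (the paper phrases this as a reduction from the \emph{qualitative cost problem} extracted from that proof). So the plan is sound, and deferring the almost-sure-reachability bookkeeping to the existing proof of Thm.~\ref{thm-MDP-general} is exactly what the paper does.

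One correction on the ``delicate point'' you flag, because your description of the gadget is backwards and, implemented literally, would break the reduction. In the construction of Thm.~\ref{thm-MDP-general} the forced escape transitions are \emph{not} unprofitable sinks: every game move also branches, with positive probability, into a second (acyclic) phase $q_0,q_1,\ldots$ in which the scheduler may add any amount up to~$T$ before reaching~$t$. This is essential for the completeness direction: since these escapes fire with positive probability under \emph{every} scheduler, a scheduler emulating Player~1's winning strategy must still be able to win after an escape, which it does by topping the counter up to exactly~$T$ (and, in your setting, the second phase must carry utility equal to its cost so the top-up also closes the utility gap). If instead the escapes were ``too expensive or too poor in utility to be useful'', then even when Player~1 wins the countdown game the objective would fail with positive probability, and probability~$1$ would never be achievable---so your claim that $t$ ``can be profitably reached only exactly when the counter has hit~$T$'' must be weakened to ``whenever the counter has not yet exceeded~$T$''. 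With that fixed, the verification is indeed the one already done in Thm.~\ref{thm-MDP-general}, unchanged except for carrying the duplicated utility along.
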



\label{sec-extension}
\paragraph{The Universal Cost Problem.} We defined the cost problem so that it asks whether \emph{there exists} a scheduler~$\sigma$ with
$\P_\sigma(K_\C \models \varphi) \ge
\tau$.
A natural variant is the \emph{universal cost problem}, which asks
whether \emph{for all} schedulers~$\sigma$ we have
$\P_\sigma(K_\C \models \varphi) \ge \tau$.  Here the scheduler is
viewed as an adversary which tries to prevent the satisfaction
of~$\varphi$.  Clearly, for cost chains the cost problem and the
universal cost problem are equivalent.  Moreover,
Thms.~\ref{thm-MDP-acyclic}~and~\ref{thm-MDP-general} hold
analogously in the universal case.
\newcommand{\stmtthmuniversal}{
The universal cost problem for acyclic cost processes is in \textsc{PSpace}.
It is \textsc{PSpace}-hard, even for atomic cost formulas.
The universal cost problem is \textsc{EXP}-complete.
}
\begin{theorem} \label{thm-universal}
\stmtthmuniversal
\end{theorem}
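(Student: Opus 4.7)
The plan is to prove Theorem~\ref{thm-universal} by dualising the arguments of Theorems~\ref{thm-MDP-acyclic} and~\ref{thm-MDP-general}. In the universal setting the scheduler acts as an adversary, so the optima in the dynamic-programming arguments become minima, and the game-theoretic reductions must place the scheduler in the opposing role.

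For the upper bounds, I would re-run the recursive procedure \Opt\ from the proof of Theorem~\ref{thm-MDP-acyclic}, replacing the maximum over actions by a minimum in the Bellman-style recursion
\[
 p_{q,c} \;=\; \min_{a \in \En(q)} \sum_{q' \in Q} \sum_{k \in \N} \Delta(q,a)(q',k) \cdot p_{q',c+k}.
\]
Acyclicity caps the recursion at depth~$|Q|$, and all intermediate probabilities remain integer multiples of~$1/D$ for a polynomially-encoded~$D$, yielding PSPACE membership. For general cost processes, the pseudo-polynomial restriction~$\widehat{\D}_\C$ of Proposition~\ref{fact-EXPTIME-upper} admits the same LP formulation, now for the worst-case (minimum-over-schedulers) reachability probability, still within EXP.

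For PSPACE-hardness in the acyclic case with atomic cost formula, I plan to reduce from a PSPACE-complete variant of \textsc{QSubsetSum} whose quantifier prefix begins with~$\forall$ (obtainable from the standard form by prefixing a vacuous~$\forall x_0$). The construction of the cost process mirrors that of Theorem~\ref{thm-MDP-acyclic}, but the scheduler now commits at each turn to the outer $\forall$-move while a subsequent $\tfrac12$/$\tfrac12$ random transition encodes the $\exists$-response. The winning-opportunity mechanism, whereby each transition escapes to~$t$ with probability proportional to its cost, is retained, as are the atomic formula $x \le B$ with $B = (n/2)\ell + T$ and the calibration of $\ell$ and~$M$. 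The main obstacle will be the quantitative analysis: while in the existential reduction random transitions faithfully emulate a $\forall$-player by realising every outcome with positive probability, here they must emulate an $\exists$-player, and one must verify that favourable random traces occur with a probability bounded below uniformly over scheduler strategies, with a polynomial-magnitude separation between positive and negative instances. The requisite separation follows by a union-bound estimate analogous to the one in the existential proof, with the probabilities in~$\Delta$ chosen so that every winning $\exists$-response is realised by a random trace of weight at least~$2^{-n/2}$.

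For EXP-hardness in the general case, the countdown-game reduction of Theorem~\ref{thm-MDP-general} dualises analogously: the scheduler plays the \emph{preventing} player of the countdown game, while the same small trick ensures that~$t$ is reached with probability~$1$ under every scheduler as required by our framework. Combined with the EXP upper bound, this yields EXP-completeness.
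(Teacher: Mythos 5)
Your upper bounds are fine (computing the minimum over schedulers directly, instead of the paper's route via closure under complement, works equally well), but both of your hardness arguments rest on a role swap that does not go through. For the crucial acyclic/atomic case, you put the scheduler in the $\forall$-role and let the $\tfrac12/\tfrac12$ coin play the $\exists$-player while \emph{retaining} the escape mechanism, the formula $x\le B$ with $B=\frac n2\ell+T$, and the calibration of $\ell,M$. But that mechanism extracts the exact-sum constraint only through the \emph{incentives} of the strategic player in the $\exists$-role: under it, $\P_\sigma(x\le B)$ is, to first order, $1/M$ times the expected cost accumulated while within budget, and in the universal problem the scheduler is an adversary trying to make this \emph{small}. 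It does so optimally by simply never taking any $k_i$, a choice whose value depends on the numbers $k_i$ but not on the truth of the quantified formula, so positive and negative \textsc{QSubsetSum} instances are not separated. Your flagged "main obstacle" is real and your proposed fix (every winning $\exists$-response has weight at least $2^{-n/2}$) does not address it: the measured event is the budget event $x\le B$, not "the random responses hit $T$", and a memoryless random player cannot be incentivised to tie the two together, while an atomic formula cannot encode "exactly $T$" directly. The paper avoids the swap entirely: the universal problem is logspace-interreducible with the complement of the existential problem (negate the formula; \textsc{PSpace} and \textsc{EXP} are closed under complement), and the only case needing new work is the atomic one, where the scheduler stays in the $\exists$-role (Player Odd) but the gadget is flipped — escapes to $t$ now incur cost $0$ (hence are losses, $K<B$) with probability proportional to the cost that would otherwise be incurred — so that hardness is shown for "exists $\sigma$ with $\P_\sigma(K<B)<\tau$", with the optimal scheduler again targeting exactly $B$.

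The dualised countdown-game reduction for \textsc{EXP}-hardness fails for a related reason: if the scheduler plays the preventing player, randomness must emulate Player~1, and a countdown play has length up to $T$, i.e., exponentially many moves in the binary encoding. The probability that uniform random choices realise a winning strategy of Player~1 is then doubly-exponentially small, so no threshold $\tau$ of polynomial representation size separates games won by Player~1 from the others — essentially the obstacle the paper itself points out in its conclusions when discussing why the Thm.~\ref{thm-MDP-acyclic} technique does not extend to exponentially long paths. No new reduction is needed here: keep the scheduler as Player~1 as in Thm.~\ref{thm-MDP-general} and transfer hardness to the universal problem by complementation (negating the, in this case non-atomic, cost formula), which is how the paper argues.
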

\begin{proof}[sketch]
The universal cost problem and the complement of the cost problem
(and their acyclic versions) are interreducible in logarithmic
space by essentially negating the cost-formulas.
The only problem is that if $\varphi$ is an \emph{atomic} cost
formula, then $\neg\varphi$ is not an atomic cost formula. However,
the \textsc{PSpace}-hardness proof from Thm.~\ref{thm-MDP-acyclic} can
be adapted, cf.\ \iftechrep{App.~\ref{app-extension}}{\cite{HK-ICALP15-TR}}.  \qed
\end{proof}


\section{Conclusions and Open Problems} \label{sec-conclusions}

In this paper we have studied the complexity of analysing succinctly represented stochastic systems
with a single non-negative and only increasing integer counter.
We have improved the known complexity bounds significantly.
Among other results,
we have shown that the cost problem for Markov chains is in
\textsc{PSpace} and both hard for PP and the \textsc{PosSLP} problem.
It would be fascinating and potentially challenging to prove either
\textsc{PSpace}-hardness or membership in the counting hierarchy:
the problem does not seem to lend itself to a
\textsc{PSpace}-hardness proof, but the authors are not aware of natural
problems, except \textsc{BitSLP}~\cite{Allender09numericalAnalysis},
that are in the counting hierarchy and known to be hard for both PP
and \textsc{PosSLP}.

Regarding acyclic and general MDPs, we have proved
\textsc{PSpace}-completeness and \textsc{EXP}-completeness,
respectively. Our results leave open the possibility that the cost
problem for atomic cost formulas is not \textsc{EXP}-hard and even
in \textsc{PSpace}. The technique described in the proof sketch of
Thm.~\ref{thm-MDP-acyclic} cannot be applied to general cost
processes, because there we have to deal with paths of exponential
length, which, informally speaking, have double-exponentially small
probabilities. Proving hardness in an analogous way would thus require
probability thresholds~$\tau$ of exponential representation size.

\paragraph{Acknowledgements.} The authors would like to thank Andreas G\"obel for valuable hints, Christel Baier and Sascha Kl\"uppelholz for thoughtful feedback on an earlier version of this paper, and the anonymous referees for their helpful comments.

\bibliographystyle{plain}
\bibliography{references}

\begin{thebibliography}{10}

\bibitem{Allender09numericalAnalysis}
E.~Allender, P.~B{\"u}rgisser, J.~Kjeldgaard-Pedersen, and P.~Bro Miltersen.
\newblock On the complexity of numerical analysis.
\newblock {\em SIAM J.\ Comput.}, 38(5):1987--2006, 2009.

\bibitem{BDDKKquantiles2014}
C.~Baier, M.~Daum, C.~Dubslaff, J.~Klein, and S.~Kl{\"u}ppelholz.
\newblock Energy-utility quantiles.
\newblock In {\em Proc.\ NFM}, volume 8430 of {\em LNCS}, pages 285--299.
  Springer, 2014.

\bibitem{BDKquantiles2014}
C.~Baier, C.~Dubslaff, and S.~Kl\"{u}ppelholz.
\newblock Trade-off analysis meets probabilistic model checking.
\newblock In {\em Proc.\ CSL-LICS}, pages 1:1--1:10. ACM, 2014.

\bibitem{BaierKatoen-book}
C.~Baier and J.-P. Katoen.
\newblock {\em Principles of Model Checking}.
\newblock MIT Press, 2008.

\bibitem{BKKM14conditional}
C.~Baier, J.~Klein, S.~Kl\"uppelholz, and S.~M\"arcker.
\newblock Computing conditional probabilities in {M}arkovian models
  efficiently.
\newblock In {\em Proc.\ TACAS}, volume 8413 of {\em LNCS}, pages 515--530,
  2014.

\bibitem{BBEK13}
T.~Br{\' a}zdil, V.~Bro{\v z}ek, K.~Etessami, and A.~Ku{\v c}era.
\newblock Approximating the termination value of one-counter {MDPs} and
  stochastic games.
\newblock {\em Inform.\ Comput.}, 222(0):121 -- 138, 2013.

\bibitem{EBKBWSoda10}
T.~Br{\' a}zdil, V.~Bro{\v z}ek, K.~Etessami, A.~Ku{\v c}era, and D.~Wojtczak.
\newblock One-counter {M}arkov decision processes.
\newblock In {\em Proc.\ SODA}, pages 863--874. SIAM, 2010.

\bibitem{BFRRStacs14}
V.~Bruy{\`e}re, E.~Filiot, M.~Randour, and J.-F. Raskin.
\newblock {Meet Your Expectations With Guarantees: Beyond Worst-Case Synthesis
  in Quantitative Games}.
\newblock In {\em Proc.\ STACS}, volume~25 of {\em LIPIcs}, pages 199--213,
  2014.

\bibitem{EY09rmc}
K.~Etessami and M.~Yannakakis.
\newblock Recursive {M}arkov chains, stochastic grammars, and monotone systems
  of nonlinear equations.
\newblock {\em J. ACM}, 56(1):1:1--1:66, 2009.

\bibitem{Gill77:PP}
J.~Gill.
\newblock Computational complexity of probabilistic {T}uring machines.
\newblock {\em SIAM J.\ Comput.}, 6(4):675--695, 1977.

\bibitem{HK15}
C.~Haase and S.~Kiefer.
\newblock The complexity of the {$K$}th largest subset problem and related
  problems.
\newblock Technical Report at http://arxiv.org/abs/1501.06729, 2015.

\bibitem{JurdzinskiSL08}
M.~Jurdzi{\'n}ski, J.~Sproston, and F.~Laroussinie.
\newblock Model checking probabilistic timed automata with one or two clocks.
\newblock {\em Log.\ Meth.\ Comput.\ Sci.}, 4(3):12, 2008.

\bibitem{Ladner89}
R.~E. Ladner.
\newblock Polynomial space counting problems.
\newblock {\em SIAM J.\ Comput.}, 18(6):1087--1097, 1989.

\bibitem{LSdurational05}
F.~Laroussinie and J.~Sproston.
\newblock Model checking durational probabilistic systems.
\newblock In {\em Proc.\ FoSSaCS}, volume 3441 of {\em LNCS}, pages 140--154.
  Springer, 2005.

\bibitem{MGLA00complexityMDP}
M.~Mundhenk, J.~Goldsmith, C.~Lusena, and E.~Allender.
\newblock Complexity of finite-horizon {M}arkov decision process problems.
\newblock {\em J.\ ACM}, 47(4):681--720, 2000.

\bibitem{Papa83}
C.~H. Papadimitriou.
\newblock Games against nature.
\newblock In {\em Proc.\ FOCS}, pages 446--450, 1983.

\bibitem{book:Puterman}
M.~L. Puterman.
\newblock {\em Markov Decision Processes: Discrete Stochastic Dynamic
  Programming}.
\newblock John Wiley and Sons, 2008.

\bibitem{RRS-CAV15}
M.~Randour, J.-F. Raskin, and O.~Sankur.
\newblock Percentile queries in multi-dimensional {M}arkov decision processes.
\newblock In {\em Proc.\ CAV}, LNCS, 2015.

\bibitem{VarSSP15}
M.~Randour, J.-F. Raskin, and O.~Sankur.
\newblock Variations on the stochastic shortest path problem.
\newblock In {\em Proc.\ VMCAI}, volume 8931 of {\em LNCS}, pages 1--18, 2015.

\bibitem{Simon77}
J.~Simon.
\newblock On the difference between one and many.
\newblock In {\em Proc.\ ICALP}, volume~52 of {\em LNCS}, pages 480--491.
  Springer, 1977.

\bibitem{Toda91}
S.~Toda.
\newblock {PP} is as hard as the polynomial-time hierarchy.
\newblock {\em SIAM J. Comput.}, 20(5):865--877, 1991.

\bibitem{Travers06}
S.~Travers.
\newblock The complexity of membership problems for circuits over sets of
  integers.
\newblock {\em Theor.\ Comput.\ Sci.}, 369(1--3):211--229, 2006.

\bibitem{UmmelsBaierQuantiles}
M.~Ummels and C.~Baier.
\newblock Computing quantiles in {M}arkov reward models.
\newblock In {\em Proc.\ FoSSaCS}, volume 7794 of {\em LNCS}, pages 353--368.
  Springer, 2013.

\end{thebibliography}

\iftechrep{
\newpage
\appendix

\addtolength{\abovecaptionskip}{0mm}
\addtolength{\belowcaptionskip}{0mm}
\addtolength{\topsep}{0mm}
\addtolength{\partopsep}{0mm}

\section{Proofs of Section~\ref{sec-prelim}} \label{app-prelim}

\newcommand{\stmtpropahalf}{
 Let $\C$ be a cost process, $\varphi$ a cost formula with $n_0 \not\models \varphi$ and $n_1 \models \varphi$ for some $n_0, n_1 \in \N$,
 and $\tau \in [0,1]$.
 One can construct in logarithmic space a cost process~$\C'$ such that the following holds:
 There is a scheduler~$\sigma$ for~$\C$ with $\P_\sigma(K_\C \models \varphi) \ge \tau$ if and only if
 there is a scheduler~$\sigma'$ for~$\C'$ with $\P_{\sigma'}(K_{\C'} \models \varphi) \ge 1/2$.
 Moreover, $\C'$ is a cost chain if $\C$ is.
}
\begin{proposition} \label{prop-a-half}
\stmtpropahalf
\end{proposition}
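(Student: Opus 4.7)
The plan is to reduce an arbitrary threshold $\tau$ to the threshold $1/2$ by prepending a single new initial state $q_0'$ to $\C$ that performs a one-shot randomised ``gate''. Namely, $q_0'$ has a unique enabled action whose distribution places probability $q \in (0,1]$ on $(q_0,0)$ (the original initial configuration of $\C$, entered with cost $0$) and the complementary probability $1-q$ on $(t,n)$, for a cost $n \in \{n_0,n_1\}$ chosen below. Since the target state is reached immediately in the second branch and $\C$ is simulated verbatim in the first branch, schedulers for $\C'$ are in one-to-one correspondence with schedulers for $\C$ (the new state offers no real choice), and for corresponding pairs $\sigma,\sigma'$ one has
\[
\P_{\sigma'}(K_{\C'}\models \varphi) \;=\; q\cdot\P_\sigma(K_\C \models \varphi)\;+\;(1-q)\cdot[n\models\varphi].
\]
This construction clearly preserves the cost-chain property.

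Next I would split on whether $\tau \le 1/2$ or $\tau > 1/2$ in order to choose $n$ and $q$ so that the image of $[0,1]$ under the affine map $x \mapsto q\,x + (1-q)[n\models\varphi]$ crosses $1/2$ exactly at $\tau$. In the case $\tau\le 1/2$, set $n:=n_1$ (so the constant term is $1-q$) and $q:=\frac{1}{2(1-\tau)}\in(0,1]$; a one-line calculation yields $\P_{\sigma'}(K_{\C'}\models\varphi)\ge 1/2 \Leftrightarrow \P_\sigma(K_\C\models\varphi)\ge\tau$. In the case $\tau>1/2$, set $n:=n_0$ (so the constant term vanishes) and $q:=\frac{1}{2\tau}\in[\frac12,1)$; then $\P_{\sigma'}(K_{\C'}\models\varphi)=q\cdot\P_\sigma(K_\C\models\varphi)$, and again $\ge 1/2$ if and only if $\P_\sigma(K_\C\models\varphi)\ge\tau$. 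The boundary $\tau=1/2$ can be handled by either case.

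It remains to verify that everything is computable in logarithmic space. Writing $\tau=a/b$ in binary, the chosen $q$ is a rational with numerator and denominator of size linear in $|a|+|b|$, hence the new transition distribution can be emitted directly. To produce $n_0$ and $n_1$, observe that the truth value of $\varphi$ on $n$ depends only on which atomic inequalities $x\le B_i$ hold, and that this vector of truth values is constant on each of the at most $k+1$ intervals of $\N$ determined by the constants $B_1,\dots,B_k$ occurring in $\varphi$; so it suffices to evaluate the propositional formula $\varphi$ on each of the numbers $0,B_1,\dots,B_k,B_{\max}+1$ and output witnesses on the fly. By hypothesis both truth values occur, and each evaluation is a logspace Boolean combination of binary comparisons.

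There is no deep obstacle here; the only thing requiring care is that the affine transformation together with the two choices of $n$ genuinely covers both regimes of $\tau$, and that the denominators $2(1-\tau)$ and $2\tau$ never vanish in the case where they are used. The degenerate instances $\tau\in\{0,1\}$ and $\tau=1/2$ are accounted for by allowing $q\in(0,1]$ and simply omitting a transition of probability $0$ in the representation, which aligns with the convention that only triples of positive probability are listed.
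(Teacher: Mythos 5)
Your proposal is correct and follows essentially the same route as the paper: prepend a fresh initial state whose single action either enters the original process with cost $0$ or jumps straight to $t$ with a satisfying cost $n_1$ (for $\tau\le 1/2$) or a falsifying cost $n_0$ (for $\tau>1/2$), with the branching probability chosen so that the resulting affine map sends threshold $\tau$ to $1/2$ — your $q$ is exactly the paper's $1-p$, resp.\ $p$. The additional remarks on extracting witnesses $n_0,n_1$ in logspace and on the degenerate values of $\tau$ are fine but not needed beyond what the paper does.
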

\begin{proof}
 Let $\tau < 1/2$.
 Define $p := (1/2 - \tau) / (1 - \tau)$.
 To construct~$\C'$ from~$\C$, add a new initial state $s_{00}$ with exactly one enabled action, say~$a$,
  and set $\Delta(s_{00},a)(t,n_1) = p$ and $\Delta(s_{00},a)(s_0,0) = 1-p$.
 In a straightforward sense any scheduler for~$\C$ can be viewed as a scheduler for~$\C'$ and vice versa.
 Thus for any scheduler~$\sigma$ we have $\P'_{\sigma}(K \models \varphi) = p + (1-p) \cdot \P_\sigma(K \models \varphi)$.
 The statement of the proposition now follows from a simple calculation.

 Now let $\tau > 1/2$.
 Define $p := 1/(2 \tau)$.
 In a similar way as before, add a new initial state $s_{00}$ with exactly one enabled action~$a$,
  and set $\Delta(s_{00},a)(t,n_0) = 1-p$ and $\Delta(s_{00},a)(s_0,0) = p$.
 Thus we have $\P'_{\sigma}(K \models \varphi) = p \cdot \P_\sigma(K \models \varphi)$,
  and the statement of the proposition follows. \qed
\end{proof}

\section{Proofs of Section~\ref{sec-quantile}} \label{app-quantile}

In this section we prove Prop.~\ref{prop-apriori-upper-bound} from the main text:

\begin{qproposition}{\ref{prop-apriori-upper-bound}}
\stmtpropaprioriupperbound
\end{qproposition}
\begin{proof}
Define $n := |Q|$.
If $\pmin = 1$, then by our assumption on the almost-sure reachability of~$t$,
 the state~$t$ will be reached within $n$ steps, and the statement of the proposition follows easily.
So we can assume $\pmin < 1$ for the rest of the proof. 

Let $j \in \N$ be the smallest integer with
 \begin{align*}
  j \ge n \cdot \left( \frac{-\ln (1-\tau)}{\pmin^n} + 1 \right)\,.
 \end{align*}
It follows:
 \begin{align}
  \left\lfloor \frac j n \right\rfloor
  & \ge \frac{-\ln (1-\tau)}{\pmin^n} \notag \\
  & \ge \frac{\ln (1-\tau)}{\ln(1 - \pmin^n)} && \text{(as $x \le - \ln (1-x)$ for $x < 1$)}
      \label{eq-apriori-upper-bound-j}
 \end{align}

For $i \in \N$ and $q \in Q$ and a scheduler~$\sigma$,
 define $p_i(q,\sigma)$ as the probability that,
  if starting in~$q$ and using the scheduler~$\sigma$, more than~$i$ steps are required
   to reach the target state~$t$.
Define $p_i := \max \{p_i(q,\sigma) : q \in Q, \ \sigma \text{ a scheduler}\}$.
By our assumption on the almost-sure reachability of~$t$,
 regardless of the scheduler, there is always a path to~$t$ of length at most~$n$.
This path has probability at least $\pmin^n$,
 so $p_n \le 1 - \pmin^n$.
If a path of length~$\ell \cdot n$ does not reach~$t$,
 then none of its $\ell$ consecutive blocks of length~$n$ reaches~$t$,
  so we have $p_{\ell \cdot n} \le p_n^\ell$.
Hence we have:
\begin{align}
 p_j & \le p_{\lfloor j/n \rfloor \cdot n} && \text{(as $p_i \ge p_{i+1}$ for all $i \in \N$)} \notag \\
     & \le p_n^{\lfloor j/n \rfloor}                 && \text{(as argued above)} \notag \\
     & \le (1 - \pmin^n)^{\lfloor j/n \rfloor}       && \text{(as argued above)} \notag \\
     &  =  \exp\big(\ln(1 - \pmin^n) \cdot \lfloor j/n \rfloor\big) \notag \\
     & \le 1 - \tau && \text{(by~\eqref{eq-apriori-upper-bound-j})} \label{eq-apriori-upper-bound-pj}
\end{align}
Denote by~$T$ the random variable that assigns to a run the ``time'' (i.e., the number of steps)
 to reach~$t$ from~$s_0$.
Then we have for all schedulers~$\sigma$:
\begin{align*}
 \P_\sigma(K \le B)
  & = \P_\sigma(K \le j \cdot \kmax) && \text{(by the definition of~$B$)} \\
  & \ge P_\sigma(T \le j) && \text{(each step costs at most $\kmax$)} \\
  &  =  1 - P_\sigma(T > j) \\
  & \ge 1 - p_j && \text{(by the definition of $T$ and $p_i$)} \\
  & \ge \tau && \text{(by~\eqref{eq-apriori-upper-bound-pj})\;,}
\end{align*}
as claimed. \qed
\end{proof}

\section{Proofs of Section~\ref{sec-MC}} \label{app-MC}

\subsection{Proof of Thm.~\ref{thm-MC-acyclic}}

In this section we prove Thm.~\ref{thm-MC-acyclic} from the main text:

\begin{qtheorem}{\ref{thm-MC-acyclic}}
\stmtthmMCacyclic
\end{qtheorem}

\begin{proof}
  First we prove membership in~PP.
  Recall from the main text that the class~PP can be
  defined as the class of languages~$L$ that have a probabilistic
  polynomial-time bounded Turing machine~$M_L$ such that for all
  words~$x$ one has $x \in L$ if and only if $M_L$ accepts~$x$ with
  probability at least~$1/2$, see~\cite{Gill77:PP} and note that PP is
  closed under complement~\cite{Gill77:PP}.
  By Prop.~\ref{prop-a-half} it suffices to consider an instance of
  the cost problem with $\tau = 1/2$.  The problem can be decided by a
  probabilistic Turing machine that simulates the cost chain as
  follows: The Turing machine keeps track of the control state and the
  cost, and branches according to the probabilities specified in the
  cost chain.  It accepts if and only if the accumulated cost
  satisfies~$\varphi$.  Note that the acyclicity of the cost chain
  guarantees the required polynomial time bound.
  This proof assumes that the probabilistic Turing machine has access
   to coins that are biased according to the probabilities in the cost chain.
  As we show in~\cite[Lem.~1]{HK15},
   this can indeed be assumed for probabilistic polynomial-time bounded
   Turing machines.

  As stated in the main text, the lower bound follows from combining
  the result from~\cite[Prop.~4]{LSdurational05} with~\cite[Thm.~3]{HK15}.
  \qed
\end{proof}

\subsection{Proof of Thm.~\ref{thm-MC-cyclic}}

In this section we prove Thm.~\ref{thm-MC-cyclic} from the main text:

\begin{qtheorem}{\ref{thm-MC-cyclic}}
\stmtthmMCcyclic
\end{qtheorem}

\medskip\noindent
First we give details on the upper bound.
Then we provide proofs of the statements from the main text that
pertain to the \textsc{PosSLP} lower bound.

\subsubsection{Proof of the Upper Bound in Thm.~\ref{thm-MC-cyclic}.} \label{app-MC-upper}
We show that the cost problem for cost chains is in \textsc{PSpace}.
As outlined in the main text we use the fact that \textsc{PSpace} equals probabilistic \textsc{PSpace}.
The cost problem for cost chains is in this class.
This can be shown in the same way as we showed in Thm.~\ref{thm-MC-acyclic}
 that the cost problem for acyclic cost chains is in PP.
More concretely, given an instance of the cost problem for cost chains,
 we construct in logarithmic space a probabilistic \textsc{PSpace} Turing machine
 that simulates the cost chain and accepts if and only if the accumulated cost~$K$ satisfies the given cost formula.

The fact that (this definition of) probabilistic \textsc{PSpace} equals \textsc{PSpace}
 was first proved in~\cite{Simon77}.
A simpler proof can be obtained using a result by Ladner~\cite{Ladner89}
 that states that \#\textsc{PSpace} equals~\textsc{FPSpace}, see~\cite{Ladner89} for definitions.
This was noted implicitly, e.g., in~\cite[Thm.~5.2]{MGLA00complexityMDP}.
We remark that the class \textsc{PPSpace} defined in~\cite{Papa83}
 also equals~\textsc{PSpace}, but its definition (which is in terms of stochastic games) is different.

\subsubsection{Proof of Prop.~\ref{lem:parikh-hardness}.} 

Here, we give a formal definition and proof of the construction
outlined in the main text which allows for computing the value of an
arithmetic circuit as the number of paths in a DFA with a certain
Parikh image. First, we formally define the notations informally used
in the main text.

We first introduce arithmetic circuits and at the same time take
advantage of a normal form that avoids gates labelled with~``$-$''.
This normal form was established in the proof
of~\cite[Thm. 5.2]{EY09rmc}.  An \emph{arithmetic circuit} is a
directed acyclic graph $G=(V,E)$ whose leaves are labelled with
constants ``$0$'' and ``$1$'', and whose vertices are labelled with
operators ``$+$'' and ``$\ast$''. Subsequently, we refer to the
elements of~$V$ as \emph{gates}.  With every gate we associate a level
starting at $0$ with leaves. For levels greater than zero, gates on
odd levels are labelled with~``$+$'' and on even levels
with~``$\ast$''.  Moreover, all gates on a level greater than zero
have exactly two incoming edges from the preceding level.  The upper
part of Fig.~\ref{fig-reduction} illustrates an arithmetic circuit in
this normal form. We can associate with every gate $v\in V$ a
non-negative integer $\mathit{val}(v)$ in the obvious way.  In this
form, the \textsc{PosSLP} problem asks, given an arithmetic circuit
$G=(V,E)$ and two gates $v_1, v_2 \in V$, whether $\mathit{val}(v_1)
\ge \mathit{val}(v_2)$ holds.

Regarding the relevant definitions of Parikh images, let
$\mathcal{A}=(Q,\Sigma,\Delta)$ be a DFA such that $Q$ is a finite set
of \emph{control states}, $\Sigma=\{a_1,\ldots,a_k\}$ is a
\emph{finite alphabet}, and $\Delta \subseteq Q\times \Sigma \times Q$
is the set of \emph{transitions}. A \emph{path} $\pi$ in $\mathcal{A}$
is a sequence of transitions $\pi=\delta_1\cdots \delta_n\in \Delta^*$
such that $\delta_i=(q_i,a_i,q_i')$ and
$\delta_{i+1}=(q_{i+1},a_{i+1},q_{i+1}')$ implies $q_i'=q_{i+1}$ for
all $1\le i<n$. Let $q,q'\in Q$, we denote by $\Pi(\mathcal{A},q,q')$
the set of all paths starting in $q$ and ending in $q'$. In this
paper, a \emph{Parikh function} is a function $f:\Sigma\to
\mathbb{N}$.  The \emph{Parikh image} of a path $\pi$, denoted
$\mathit{parikh}(\pi)$, is the unique Parikh function counting for
every $a\in \Sigma$ the number of times $a$ occurs on a transition in
$\pi$.

The following statement of Prop.~\ref{lem:parikh-hardness}
 makes the one given in the main text more precise.

\begin{qproposition}{\ref{lem:parikh-hardness}}
Let $G=(V,E)$ be an arithmetic
  circuit and $v\in V$. There
  exists a log-space computable DFA $\mathcal{A}=(Q,\Sigma,\Delta)$
  with distinguished control states $q,q'\in Q$ and a Parikh function $f:
  \Sigma \to \mathbb{N}$ such that
  \begin{align*}
    \val(v) = |\{ \pi\in \Pi(\mathcal{A},q,q') : \mathit{parikh}(\pi)=f \}|.
  \end{align*}
\end{qproposition}
\begin{proof}
  We construct $\mathcal{A}$ by induction on the number of levels of
  $V$. For every level~$i$, we define an alphabet $\Sigma_i$ and a
  Parikh function $f_i:\Sigma_i\to \mathbb{N}$. As an invariant,
  $\Sigma_i\subseteq \Sigma_{i+1}$ holds for all levels $i$. Subsequently,
  denote by $V(i)$ all gates on level $i$. For every $v\in V(i)$,
  we define a DFA $\mathcal{A}_v$ such that each $\mathcal{A}_v$ has
  two distinguished control locations $\mathit{in}(\mathcal{A}_v)$ and
  $\mathit{out}(\mathcal{A}_v)$. The construction is such that
  \begin{align}
    \label{eqn:invariant}
    \mathit{val}(v) =
    |\{ \pi\in \Pi(\mathcal{A}_v, \mathit{in}(\mathcal{A}_v),
    \mathit{out}(\mathcal{A}_v)) :
    \mathit{parikh}(\pi)=f_i \}|.
  \end{align}
  For technical convenience, we allow transitions to be labelled with
  subsets $S\subseteq \Sigma$ which simply translates into an
  arbitrary chain of transitions such that each $a\in S$ occurs
  exactly once along this chain. We now proceed with the details of
  the construction starting with gates on level $0$.

  With no loss of generality we may assume that there are two gates
  $v$ and $w$ on level $0$ labelled with $0$ and $1$,
  respectively.
  Let $\Sigma_0 = \{a\}$ for some letter~$a$.
  The DFA $\mathcal{A}_{v}$ and $\mathcal{A}_{w}$ over $\Sigma_0$ is
  defined as follows: $\mathcal{A}_{w}$ has a single transition
  connecting $\mathit{in}(\mathcal{A}_w)$ with
  $\mathit{out}(\mathcal{A}_w)$ labelled with~$a$, whereas
  $\mathcal{A}_{v}$ does not have this transition. Setting
  $f_0(a) = 1$, it is easily checked that (\ref{eqn:invariant})
  holds for those DFA.

  For level $i+1$,
  we define $\Sigma_{i+1}=\Sigma_i \uplus \{ a_v, b_v, c_v : v\in
  V(i+1) \}$. Let $v\in V(i+1)$ be a gate on level $i+1$ such that
  $v$ has incoming edges from $u$ and $w$. Let
  $\mathcal{A}_u=(Q_u,\Sigma_i,\Delta_u)$ and
  $\mathcal{A}_w=(Q_w,\Sigma_i,\Delta_w)$ be the DFA representing $u$
  and $w$.
  Let $Q_v$ be a set of fresh control states.
  We define $\mathcal{A}_{v}=(Q_v\cup Q_u \cup Q_w,
  \Sigma_{i+1}, \Delta_v\cup \Delta_u\cup \Delta_w)$.
  The particularities of
  the construction depend on the type of~$v$.

  If $i+1$ is odd, i.e., the gates on this level are labelled
  with ``+'', then apart from the control states
  $\mathit{in}(\mathcal{A}_{v})$ and $\mathit{out}(\mathcal{A}_{v})$,
  the set~$Q_v$ contains three additional control states $q,q_1,q_2$.
  Further we set
  $\Delta_v=\{\delta_1,\ldots, \delta_7 \}$ such that
  \begin{itemize}
  \item $\delta_1=(\mathit{in}(\mathcal{A}_{v}), S_v, q)$, where $S_v=\{
    a_w,b_w, c_w : w\in V(i+1), v\neq w \}$;
  \item $\delta_2=(q,a_v,q_1)$ and $\delta_3=(q,b_v,q_2)$;
  \item $\delta_4=(q_1,b_v,\mathit{in}(\mathcal{A}_u))$ and
    $\delta_5=(q_2,a_v,\mathit{in}(\mathcal{A}_w))$; and
  \item
    $\delta_6=(\mathit{out}(\mathcal{A}_u),c_v,\mathit{out}(\mathcal{A}))$
    and
    $\delta_7=(\mathit{out}(\mathcal{A}_w),c_v,\mathit{out}(\mathcal{A}))$.
  \end{itemize}
  Informally speaking, we simply branch at $q$ into $\mathcal{A}_u$
  and $\mathcal{A}_w$, and this in turn enforces that the number of
  paths in $\Pi(\mathcal{A}_{v},\mathit{in}(\mathcal{A}_{v}),
  \mathit{out}(\mathcal{A}_{v}))$ on which $a_v$ occurs once equals
  the sum of $\mathit{val}(u)$ and $\mathit{val}(w)$. The reason
  behind using \emph{both} $a_v$ and $b_v$ is that it ensures that the
  case $u=w$ is handled correctly. Setting $f_{i+1}(a)=1$ if $a\in
  \Sigma_{i+1}\setminus \Sigma_i$, and $f_{i+1}(a)=f_{i}(a)$
  otherwise, we consequently have that (\ref{eqn:invariant}) holds
  since
  \begin{align*}
    & \hspace{5mm} |\{ \pi\in \Pi(\mathcal{A}_{v}, \mathit{in}(\mathcal{A}_{v}),
    \mathit{out}(\mathcal{A}_{v})) : \mathit{parikh}(\pi) = f_{i+1} \}|\\
    & = |\{ \pi\in \Pi(\mathcal{A}_u, \mathit{in}(\mathcal{A}_u),
    \mathit{out}(\mathcal{A}_u)) : \mathit{parikh}(\pi) = f_{i} \}|
    + \mbox{}\\ & \qquad \mbox{} + |\{ \pi\in \Pi(\mathcal{A}_w, \mathit{in}(\mathcal{A}_w),
    \mathit{out}(\mathcal{A}_w)) : \mathit{parikh}(\pi) = f_{i} \}|\\
    & = \mathit{val}(u) + \mathit{val}(w)\\
    & = \mathit{val}(v).
  \end{align*}

  The case of $i+1$ being even can be handled analogously, but instead
  of using branching we use sequential composition in order to
  simulate the computation of a gate labelled with~``$\ast$''. Apart
  from the control states $\mathit{in}(\mathcal{A}_{v})$ and
  $\mathit{out}(\mathcal{A}_{v})$, the set~$Q_v$ contains an additional
  control state $q$.
  Further we set $\Delta_v=\{\delta_1,\ldots, \delta_4 \}$
  such that
  \begin{itemize}
  \item $\delta_1=(\mathit{in}(\mathcal{A}_{v}), S_v, q)$, where
    $S_v=\{ a_w,b_w, c_w : w\in V(i+1), v\neq w \}$;
  \item $\delta_2=(q,a_v,\mathit{in}(\mathcal{A}_u))$;
  \item
    $\delta_3=(\mathit{out}(\mathcal{A}_u),b_v,\mathit{in}(\mathcal{A}_w))$;
    and
  \item
    $\delta_4=(\mathit{out}(\mathcal{A}_w),c_v,\mathit{out}(\mathcal{A}_v))$.
  \end{itemize}
  A difference to the case where $i+1$ is odd is that via the
  definition of $f_{i+1}$ we have to allow for paths that can traverse
  \emph{both} $\mathcal{A}_u$ and $\mathcal{A}_w$. Consequently, we
  define $f_{i+1}(a)=1$ if $a\in \Sigma_{i+1}\setminus \Sigma_i$, and
  $f_{i+1}(a)=2 f_i(a)$ otherwise. Similarly as above,
  (\ref{eqn:invariant}) holds since
  \begin{align*}
    & \hspace{5mm}  |\{ \pi\in \Pi(\mathcal{A}_{v}, \mathit{in}(\mathcal{A}_{v}),
    \mathit{out}(\mathcal{A}_{v})) : \mathit{parikh}(\pi) = f_{i+1} \}| \\
    & = |\{ \pi\in \Pi(\mathcal{A}_u, \mathit{in}(\mathcal{A}_u),
    \mathit{out}(\mathcal{A}_u)) : \mathit{parikh}(\pi) = f_{i} \}|
    \cdot \mbox{}\\ & \qquad \mbox{} \cdot |\{ \pi\in \Pi(\mathcal{A}_w, \mathit{in}(\mathcal{A}_w),
    \mathit{out}(\mathcal{A}_w)) : \mathit{parikh}(\pi) = f_{i} \}| \\
    & = \mathit{val}(u) \cdot \mathit{val}(w)\\
    & = \mathit{val}(v).
  \end{align*}

  Due to the inductive nature of the construction, the cautious reader
  may on the first sight cast doubt that the computation of
  $\mathcal{A}_v$ and $f$ can be performed in logarithmic
  space. However, a closer look reveals that the graph underlying
  $\mathcal{A}_v$ has a simple structure and its list of edges can be
  constructed without prior knowledge of the DFA on lower
  levels. Likewise, even though $f$ contains numbers which are
  exponential in the number of levels of $G$, the structure of $f$ is
  simple and only contains numbers which are powers of two, and hence
  $f$ is computable in logarithmic space as well.\qed
\end{proof}

\newcommand{\T}{\mathcal{T}}%

\subsubsection{Proof of Prop.~\ref{prop-SLPtoMC}.} 

The following statement of Prop.~\ref{prop-SLPtoMC}
 makes the one given in the main text more precise.

\begin{qproposition}{\ref{prop-SLPtoMC}}
Let $G=(V,E)$ be an arithmetic circuit.
  Let $v\in V$ be a gate on level~$\ell$ with odd~$\ell$.  There exist
  a log-space computable cost process~$\C$ and $T \in \N$ with
  $\P(K_\C = T) = \mathit{val}(v)/m$, where $ m =
  \exp_2(2^{(\ell-1)/2+1}-1)\cdot \exp_d(2^{(\ell-1)/2+1}-3)\,.  $ 
\end{qproposition}

\medskip
For a clearer proof
structure we define an intermediate formalism between DFA and cost
chains. A \emph{typed cost chain} $\T = (Q, q_0, t, \Gamma, \Delta)$
is similar to a cost chain, but with costs (i.e., natural numbers)
replaced with functions $\Gamma \to \N$.  The intuition is that
instead of a single cost, a typed cost chain keeps track of several
types of cost, and each type is identified with a symbol
from~$\Gamma$.  More precisely, $Q$ is a finite set of control states,
$q_0 \in Q$ is the initial control state, $t$ is the target control
state, $\Gamma$ is a finite alphabet, and $\Delta : Q \to \dist(Q
\times \N^\Gamma)$ is a probabilistic transition function.

A typed cost chain~$\T$ induces a Markov chain in the same way as a
cost chain does, but the state space is $Q \times \N^\Gamma$ rather
than~$Q \times \N$.  Formally, $\T$ induces the Markov chain $\D_\T =
(Q \times \N^\Gamma, (q_0,\vec{0}), \delta)$, where by~$\vec{0}$ we
mean the function $c: \Gamma \to \N$ with $c(a) = 0$ for all $a \in
\Gamma$, and $\delta(q,c)(q',c') = \Delta(q)(q', c'-c)$ holds for all
$q,q' \in Q$ and $c,c' \in \N^\Gamma$, where by $c'-c$ we mean $c'':
\Gamma \to \N$ with $c''(a) = c'(a) - c(a)$ for all $a \in \Gamma$.
As before, we assume that the target control state~$t$ is almost
surely reached.  We write $K_\T$ for the (multi-dimensional) random
variable that assigns a run in~$\D_\T$ the typed cost $c: \Gamma \to
\N$ that is accumulated upon reaching~$t$.

\begin{lemma}\label{lem-typed-cost-chain-hardness}
  Let $G=(V,E)$ be an arithmetic circuit.
  Let $v\in V$ be a gate on level~$\ell$ with odd~$\ell$.
  Let $d=|V|+1$.
  There exist a log-space computable typed cost chain $\T = (Q, q_0, t,
  \Gamma, \Delta)$ and $c: \Gamma \to \mathbb{N}$ such that $\P(K_\T =
  c) = \val(v)/m$, where
  \begin{align*}
    m  = \exp_2(2^{(\ell-1)/2+1}-1)\cdot \exp_d(2^{(\ell-1)/2+1}-3)\,.
  \end{align*}
\end{lemma}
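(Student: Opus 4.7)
Apply Proposition~7 to the circuit and gate~$v$ to obtain, in logspace, a DFA $\mathcal{A} = (Q, \Sigma, \Delta)$ with distinguished states $q, q'$ and a Parikh function $f : \Sigma \to \mathbb{N}$ such that $\val(v) = |\{\pi \in \Pi(\mathcal{A}, q, q') : \mathit{parikh}(\pi) = f\}|$. The plan is to convert $\mathcal{A}$ into a typed cost chain~$\T$ whose target typed cost is~$f$ and in which every accepting DFA path is realised by a run of identical probability $1/m$, so that the corresponding event has total probability $\val(v)/m$.

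Take $\T = (Q', q, q', \Gamma, \Delta_\T)$ with $Q' \supseteq Q$ and $\Gamma := \Sigma$. Interpret each DFA transition $(p,a,p') \in \Delta$ as a transition of $\T$ with typed cost increment $e_a \in \N^\Gamma$; then the typed cost accumulated on reaching $t := q'$ equals the Parikh image of the corresponding DFA path, and setting $c := f$ makes $\{K_\T = c\}$ coincide with the $\val(v)$ accepting paths from Proposition~7. Probabilities are assigned locally: at each $+$-gate state~$q$ (the only branching state in Proposition~7, with two outgoing transitions $a_v, b_v$) assign probability $1/2$ each, and let every other transition be deterministic. To satisfy the almost-sure-reachability requirement for cost processes, dead ends arising from $0$-leaves are rerouted to~$t$ by deterministic chains whose accumulated typed cost is chosen to differ from~$f$, so those runs never contribute to $\{K_\T = c\}$. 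In addition, at each $\ast$-gate we insert local uniform $d$-ary choices (with $d = |V|+1$) whose $d-1$ off-target branches dispatch to~$t$ with an off-target typed cost; these produce the $\exp_d(\cdot)$ factor in~$m$ without enlarging the set of contributing runs.

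Because the circuit is in normal form (strictly alternating $+$- and $\ast$-levels), every sub-DFA at a fixed level has the same structural depth, so every accepting path in $\mathcal{A}_v$ traverses exactly the same number of $+$-branch choices and of $\ast$-level $d$-ary choices. Consequently all accepting runs have the same probability, and an induction on the level~$i$, with base $M_0 = 1$ (a leaf contributes either $0$ or $1$ accepting run of probability~$1$), inductive case $M_{i+1} = 2 M_i$ at a $+$-level (one $1/2$ branching, together with $\val(v) = \val(u)+\val(w)$), and multiplicative update $M_{i+1} = (\text{local } d\text{-factor}) \cdot M_i^2$ at a $\ast$-level (from sequential composition through both sub-DFAs and the $d$-ary patching), gives the uniform per-path probability $1/M_i$. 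Solving the recurrence with $\ell = 2L+1$ yields $M_\ell = \exp_2(2^{L+1}-1) \cdot \exp_d(2^{L+1}-3) = m$, whence $\P(K_\T = c) = \val(v)/m$.

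The main obstacle is the coordinated design of the dead-end reroutings and the $\ast$-level patchings: they must simultaneously guarantee almost-sure reachability of~$t$, preserve the count of contributing runs at exactly $\val(v)$, and equalise the per-run probability across all accepting paths so that the recurrence closes to the precise expression for~$m$. Log-space computability is inherited from Proposition~7 together with the local, constant-size nature of each probability assignment and the fact that all probabilities used belong to $\{1, 1/2, 1/d\}$.
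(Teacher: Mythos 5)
Your high-level plan (take the DFA from the Parikh proposition, turn letters into unit typed costs, set the target to the Parikh function, use probabilities from $\{1,1/2,1/d\}$) is the same as the paper's, but the construction you sketch has a genuine gap, and it is exactly the point you yourself defer as ``the main obstacle''. The DFA of the Parikh proposition is built on the circuit \emph{DAG}, so the sub-automaton $\mathcal{A}_u$ of a gate $u$ is shared by all parents of $u$: the state $\mathit{out}(\mathcal{A}_u)$ has one outgoing edge per parent gate, i.e.\ out-degree equal to the fan-out of $u$. Hence your stipulation ``assign $1/2$ at the $+$-branch states and let every other transition be deterministic'' is not even well-defined at these states, and if you place a uniform distribution on them the probability of an accepting run depends on the wiring (fan-outs) of the particular circuit, not just on the level; your key claim that all accepting runs have the same probability $1/M_i$, and with it the closed form for $m$, collapses. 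The paper resolves this by padding \emph{every} $\mathit{out}(\mathcal{A}_w)$ with $w\neq v$ to out-degree exactly $d=|V|+1$ using fresh error letters $e_j$ (added to $\Gamma$, with target value $0$) leading to $t$; this single device makes the ``correct continuation'' probability exactly $1/d$ at every such state independently of fan-out, guarantees almost-sure reachability of $t$, and makes any run using an error edge automatically miss the target cost — something your ``off-target typed cost'' over $\Gamma=\Sigma$ only gestures at.

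This also breaks your recurrence. Because of the padded exits, a $1/d$ factor is incurred at $+$-levels as well (the paper's induction gives $p_w=\tfrac{1}{2d}(p_u+p_{u'})$ at a $+$-gate and $p_w=\tfrac{1}{d^2}p_u p_{u'}$ at a $\ast$-gate), whereas you place all $d$-factors at $\ast$-levels and use $M_{i+1}=2M_i$ at $+$-levels. With $d$-factors only at $\ast$-levels the exponent of $d$ at level $\ell=2L+1$ is a multiple of $2^{L}-1$, which never equals the exponent $2^{L+1}-3$ in the stated $m$, so the recurrence does not ``close'' as claimed. Finally, note that the paper does not need your global equal-probability-of-runs argument at all: it proves by induction on the level that the probability $p_w$ of reaching $\mathit{out}(\mathcal{A}_w)$ from $\mathit{in}(\mathcal{A}_w)$ with the prescribed typed cost equals $\val(w)/m(i)$, which is both more robust and what actually yields the constant $m$.
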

\begin{proof}
  With no loss of generality we may assume that the maximum level of
  $V$ is $\ell$ and that $v$ is the only gate on level $\ell$. The idea is
  to translate the DFA obtained in Prop.~\ref{lem:parikh-hardness}
  into a suitable typed cost chain. Subsequently, we refer to the
  terminology used in the proof of Prop.~\ref{lem:parikh-hardness}.

  Let $\mathcal{A}=(Q,\Sigma,\Delta)$ be the DFA,
  $q_0=\mathit{in}(\mathcal{A}_v)$, $t=\mathit{out}(\mathcal{A}_v)$,
  and $f:\Sigma \to \mathbb{N}$ be the Parikh function obtained
  from Prop.~\ref{lem:parikh-hardness}. We define $\Gamma=\Sigma
  \uplus \{e_j : 1 \le j \le d \}$ and alter $\mathcal{A}$ as
  follows:
  \begin{itemize}
  \item for the gate $w \in V$ on level~0 labelled with~0,
    we add an edge from $\mathit{in}(\mathcal{A}_w)$ to~$t$
    labelled with $e_1$; and
  \item for every $w\in V$ such that $w\neq v$, we add $k$ edges
    labelled with $e_1,\ldots, e_k$ from $\mathit{out}(\mathcal{A}_w)$
    to $t$, where $k$ is the difference between $d$ and the
    number of outgoing edges from $\mathit{out}(\mathcal{A}_w)$.
  \end{itemize}

  The DFA $\mathcal{A}'=(Q,\Gamma,\Delta')$ obtained from this
  construction has the property that $t$ can be reached from any
  control state, and that the number of outgoing edges from any
  $\mathit{out}(\mathcal{A}_w)$ for $w\neq v$ is uniform. Finally, we
  define $c:\Gamma \to \mathbb{N}$ such that $c$ coincides with
  $f$ for all $a\in \Sigma$ and $c(e_j)=0$ for all $1\le j\le
  k$. The intuition behind the $e_j$ is that they indicate errors, and
  once an edge with an $e_j$ is traversed it is impossible to reach
  $t$ with Parikh image~$c$. Thus,
  in particular property (\ref{eqn:invariant}) is preserved in
  $\mathcal{A}'$.

  We now transform $\mathcal{A}'$ into a typed cost chain $\T$.
  Subsequently, for $a\in \Gamma$ let $c_a:\Gamma\to \{0,1\}$ be the
  function such that $c_a(b)=1$ if $b=a$ and $c_a(b)=0$ otherwise. For
  our transformation, we perform the following steps:
  \begin{itemize}
  \item every alphabet letter $a\in \Gamma$ labelling a transition of
    $\mathcal{A}'$ is replaced by $c_a$;
  \item the probability distribution over edges is chosen uniformly;
    and
  \item a self-loop labelled with $\vec{0}$ and probability 1 is added
    at $t$.
  \end{itemize}
  We observe that the transition probabilities of $\T$ are either
  $1/d$, $1/2$ or $1$.
  Since $t$ can be reached from any control state, it is eventually reached with probability~$1$.

  For every level~$i$ and every $w \in V(i)$, let $p_w$ denote the probability that,
  starting from~$\mathit{in}(\mathcal{A}_w)$, the control state $\mathit{out}(\mathcal{A}_w)$ is reached
  \emph{and} typed cost~$c_i$ is accumulated.
  Here, $c_i$ refers to the Parikh function~$f_i$ constructed in the proof of
  Prop.~\ref{lem:parikh-hardness}, where we assert that $c_i(a)=0$ for
  all $a\in \Gamma$ on which the ``original'' $f_i$ is
  undefined.
  Since $t = \mathit{out}(\mathcal{A}_v)$ is almost surely reached
  from $q_0=\mathit{in}(\mathcal{A}_v)$, we have $p_v = \P(K_\T = c_\ell)$.
  So in order to prove the lemma, it suffices to prove for all $i \in \N$:
  \[
   p_w = \frac{\mathit{val}(w)}{m(i)} \qquad \text{for all $w \in V(i)$,}
  \]
  where
  \begin{align*}
    m(i) & = \left\{
    \begin{array}{ll}
      \exp_2(2^{i/2+1}-2)\cdot \exp_d(2^{i/2+1}-4) & \text{ if } i \text{ is even}\\
      \exp_2(2^{(i-1)/2+1}-1)\cdot \exp_d(2^{(i-1)/2+1}-3) & \text{ if } i
      \text{ is odd\,.}
    \end{array}\right.
  \end{align*}
  We proceed by induction on the level~$i$.
  Let $i=0$.
  If $w$ is labelled with $1$ then there is exactly one
  outgoing transition from~$\mathit{in}(\mathcal{A}_w)$,
   and this transition goes to~$\mathit{out}(\mathcal{A}_w)$ and incurs cost~$c_0$.
  So we have $p_w = 1$ as required.
  If $w$ is labelled with $0$, then
   the only outgoing transition from~$\mathit{in}(\mathcal{A}_w)$ incurs cost~$c$ with $c(e_1) = 1$,
   hence $p_w=0$.

  For the induction step, let $i \ge 0$.
  Let $w \in V(i+1)$ and let $u,u'\in V(i)$ be the gates connected to~$w$.
  If $i+1$ is odd then $w$ is labelled with ``+'', and by the
  construction of $\mathcal{A}_w$ and the transformation above we have
  \begin{align*}
    p_w & = \frac{1}{2} \cdot \frac{1}{d}\cdot ( p_u + p_{u'} ) \\
    & = \frac{1}{2} \cdot \frac{1}{d}\cdot \frac{\mathit{val}(u) + \mathit{val}(u')}
    {\exp_2(2^{i/2+1}-2)\cdot \exp_d(2^{i/2+1}-4)} & \text{by the ind.\ hypoth.}\\
    & = \frac{\mathit{val}(u) + \mathit{val}(u')}
    {\exp_2(2^{i/2+1}-1)\cdot \exp_d(2^{i/2+1}-3)} \\
    & = \ \frac{\mathit{val}(w)}{m(i+1)} .
  \end{align*}
  The factor $1/2$ is the probability of branching into
  $\mathit{in}(\mathcal{A}_u)$ or $\mathit{in}(\mathcal{A}_{u'})$, and
  $1/d$ is the probability that when leaving
  $\mathit{out}(\mathcal{A}_u)$ respectively
  $\mathit{out}(\mathcal{A}_{u'})$, the transition to
  $\mathit{out}(\mathcal{A}_w)$ is taken.

  Otherwise, if $i+1$ is even, we have
  \begin{align*}
    p_w & = \frac{1}{d^2} \cdot p_u \cdot p_{u'}
    \\
    & = \frac{1}{d^2}\cdot \frac{\mathit{val}(u) \cdot \mathit{val}
      (u')}
    {(\exp_2(2^{(i-1)/2+1}-1)\cdot \exp_d(2^{(i-1)/2+1}-3))^2} &
    \text{by the ind. hypoth.}\\
    & = \frac{1}{d^2} \cdot \frac{\mathit{val}(u) \cdot \mathit{val}(u')}
    {\exp_2(2^{(i-1)/2+2}-2)\cdot \exp_d(2^{(i-1)/2+2}-6)}\\
    & = \frac{\mathit{val}(u) \cdot \mathit{val}(u')}
    {\exp_2(2^{(i+1)/2+1}-2)\cdot \exp_d(2^{(i+1)/2+1}-4)} \\
    & = \frac{\mathit{val}(w)}{m(i+1)}.
  \end{align*}
  Here, $1/d^2$ is the probability that when leaving
  $\mathit{out}(\mathcal{A}_u)$ the transition to
  $\mathit{in}(\mathcal{A}_{u'})$ is taken, and that when leaving
  $\mathit{out}(\mathcal{A}_{u'})$ the transition to
  $\mathit{out}(\mathcal{A}_w)$ is taken. \qed
\end{proof}

In order to complete the proof of Prop.~\ref{prop-SLPtoMC}, we
now show how a typed cost chain can be transformed into a cost
chain. The idea underlying the construction in the next lemma is that
we can encode alphabet symbols into the digits of natural numbers
represented in a suitable base.

\begin{lemma}
  \label{lem:typed-costs-to-naturals}
  Let $\Gamma$ be a finite alphabet, and let $c, c_1, \ldots, c_n:
  \Gamma \to \N$ be functions represented as tuples with numbers
  encoded in binary. There exists a log-space computable homomorphism
  $h : \N^\Gamma \to \N$ such that for all $\lambda_1, \ldots,
  \lambda_n \in \N$ we have
  \begin{align*}
    \sum_{i=1}^n \lambda_i c_i = c \ \Longleftrightarrow \sum_{i=1}^n \lambda_i h(c_i) = h(c)\,.
  \end{align*}
\end{lemma}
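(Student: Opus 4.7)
I would define $h$ by Kronecker substitution.  After fixing an enumeration $\Gamma = \{a_0,\ldots,a_{k-1}\}$ with $k = |\Gamma|$, set
\[
 h(v) \ := \ \sum_{j=0}^{k-1} v(a_j) \cdot B^{j}
\]
for a base $B \in \N$ to be chosen from the bit-sizes of $c, c_1, \ldots, c_n$.  This $h$ is immediately an additive monoid homomorphism $(\N^\Gamma,+) \to (\N,+)$, and because $B$ will have polynomially many bits, computing $h(v)$ in binary amounts to concatenating fixed-width binary blocks encoding the entries $v(a_j)$ — a clearly log-space operation.

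The forward direction $\sum_i \lambda_i c_i = c \Rightarrow \sum_i \lambda_i h(c_i) = h(c)$ is immediate by the linearity of $h$.  For the reverse, I would set $d := \sum_i \lambda_i c_i \in \N^\Gamma$ and again use linearity to obtain $h(d) = \sum_i \lambda_i h(c_i) = h(c)$.  The goal then becomes $d = c$, and this follows from the uniqueness of base-$B$ digit expansions of the natural number $h(c) = h(d)$: provided every digit $c(a_j), d(a_j)$ is strictly smaller than $B$, equality of the two numbers forces $c(a_j) = d(a_j)$ coordinate by coordinate, which is exactly the vector identity.

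The crux, and the step I expect to require the most care, is ensuring this ``no-carry'' condition $d(a_j) < B$.  Taking $B$ strictly larger than $\max_a c(a)$ handles the digits of $c$ immediately.  To bound $d(a_j) = \sum_i \lambda_i c_i(a_j)$, I would exploit the scalar identity itself: discarding the trivial case $c_i = 0$, one has $\lambda_i \le h(c)/h(c_i)$, so $d(a_j)$ is controlled in terms of $h(c)$, $n$, and $\max_{i,a} c_i(a)$.  A naive coarse bound grows with $B^{k-1}$ and is too loose when $k \ge 2$, so I would proceed coordinate by coordinate, peeling off the highest-weight position $a_{k-1}$ first and reducing to an analogous instance on the positions $\{a_0, \ldots, a_{k-2}\}$; at each step the residual identity tightens the bound on the next digit.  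A sufficiently aggressive choice of $B$ — still polynomial in the bit-length of the input, hence log-space describable — makes this inductive argument close and yields the claimed equivalence.
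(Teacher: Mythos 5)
There is a genuine gap here, and in fact the construction you propose cannot be repaired by any choice of base. With plain Kronecker substitution $h(v)=\sum_{j=0}^{k-1}v(a_j)\,B^{j}$ the reverse implication is simply false: take $n=1$, let $c_1$ be the indicator function of $a_0$ (so $h(c_1)=1$) and $c$ the indicator of $a_1$ (so $h(c)=B$). Then $\lambda_1=B$ satisfies $\lambda_1 h(c_1)=h(c)$, yet $\lambda_1 c_1=(B,0,\dots)\neq c$. This defeats the plan no matter how aggressively $B$ is chosen, because $B$ must be fixed from the input while the $\lambda_i$ range over all of $\N$; your own bound $\lambda_i\le h(c)/h(c_i)$ is attained exactly by this counterexample, so the no-carry condition $\sum_i\lambda_i c_i(a_j)<B$ cannot be derived from the scalar identity alone, and the peeling induction you sketch is attempting to prove a false statement: unboundedly many units in a low-order position can legitimately simulate one unit in a higher-order position.

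The paper closes precisely this hole by adding a \emph{check digit}: with $m:=\sum_{a\in\Gamma}c(a)$ and $b:=m+1$ it sets $h(d):=\sum_{j=0}^{k-1}d(a_j)\,b^{j}+\bigl(\sum_{a\in\Gamma}d(a)\bigr)\cdot b^{k}$, which is still linear in $d$, hence still a homomorphism, and still log-space computable. Since $h(c)<b^{k+1}$, the identity $\sum_i\lambda_i h(c_i)=h(c)$ forces the total mass $\sum_{a\in\Gamma}\sum_i\lambda_i c_i(a)$ to be at most $m<b$; this single inequality bounds every coordinate of $d:=\sum_i\lambda_i c_i$ (and the check digit itself) below $b$, so no carries occur, and uniqueness of base-$b$ expansions then gives $d=c$ digit by digit. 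Some device of this kind --- a linear functional in the top position that caps $\sum_i\lambda_i$ times the mass of the $c_i$ --- is exactly what your argument is missing.
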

  \begin{proof}
  Let $\Gamma=\{a_0,\ldots,a_{k-1}\}$, $m=\sum_{a\in \Gamma}c(a)$, and
  $b=m+1$. We define $h:\mathbb{N}^\Gamma \to \mathbb{N}$ as
  \begin{align*}
    h(d) & = d(a_0)\cdot b^0 + d(a_1)\cdot b^1 + \cdots + d(a_{k-1})\cdot b^{k-1} +
    \left(\sum\nolimits_{a\in \Gamma} d(a) \right) \cdot b^k.
  \end{align*}
  The homomorphism $h$ encodes any $d:\Gamma\to \mathbb{N}$ into the
  $k$ least significant digits of a natural number represented in base
  $b$, and the $k+1$-th digit serves as a check digit.

  Suppose $\sum_{i=1}^n \lambda_i c_i = c$. Then
  \begin{align*}
    \sum_{i=1}^n\lambda_i h(c_i) &
    = \sum\nolimits_{j=0}^{k-1}
    \left(\sum\nolimits_{i=1}^n\lambda_i c_i(a_j)\right)\cdot b^j +
     \left(\sum\nolimits_{a\in \Gamma}\sum\nolimits_{i=1}^n\lambda_i c_i(a)\right)
       \cdot b^k\\
       & = \sum\nolimits_{j=0}^{k-1} c(a_j)\cdot b^j + \sum\nolimits_{a\in \Gamma} c(a)
       \cdot b^k\\
       & = h(c).
  \end{align*}
  Conversely, assume that $\sum_{i=1}^n\lambda_ih(c_i)=h(c)$. By
  definition of $h$, the check digit $k+1$ ensures that
  \begin{align*}
    \sum_{a\in \Gamma}\sum_{i=1}^n\lambda_i c_i(a)
    = \sum_{a\in \Gamma} c(a) = m < b.
  \end{align*}
  Thus, in particular for a fixed $a_j\in \Sigma$ we have
  \begin{align*}
    \sum_{i=1}^n\lambda_i c_i(a_j) < b.
  \end{align*}
  But now, since $\sum_{i=1}^n\lambda_ih(c_i)=h(c)$ we have
  \begin{align*}
    \sum_{i=1}^n\lambda_i c_i(a_j) = c(a_j),
  \end{align*}
  and consequently $\sum_{i=1}^n\lambda_ic_i = c$. \qed
  \end{proof}
By replacing every typed cost function $c$ in $\T$ with $h(c)$, an
easy application of Lem.~\ref{lem:typed-costs-to-naturals} now yields
the following corollary.

\begin{corollary}
  Let $\T=(Q,q_0,t,\Gamma,\Delta)$ be a typed cost chain and
  $c:\Gamma\to \mathbb{N}$. There exist a log-space computable cost
  chain $\C=(Q,q_0,t,\delta)$ and $n\in \mathbb{N}$ such that
  \begin{align*}
    \P(K_\T = c) = \P(K_\C=n).
  \end{align*}
\end{corollary}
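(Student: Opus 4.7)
The plan is a direct appeal to Lem.~\ref{lem:typed-costs-to-naturals}. I would first enumerate the (finitely many) typed costs $c_1,\ldots,c_N\in\N^\Gamma$ that appear on transitions of~$\T$, i.e.\ all $k\in\N^\Gamma$ with $\Delta(q)(q',k)>0$ for some $q,q'\in Q$. Feeding these together with the target cost~$c$ into the lemma yields a log-space computable homomorphism $h:\N^\Gamma\to\N$ satisfying
\begin{align*}
\sum\nolimits_{i=1}^N \lambda_i\, c_i = c \ \Longleftrightarrow\ \sum\nolimits_{i=1}^N \lambda_i\, h(c_i) = h(c) \quad\text{for every } \lambda_i\in\N.
\end{align*}

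Next I would define $\C=(Q,q_0,t,\delta)$ by pushing each $\Delta(q)$ forward through~$h$ on the cost coordinate, i.e.
\begin{align*}
\delta(q)(q',m) \ :=\ \sum_{k\in\N^\Gamma :\, h(k)=m} \Delta(q)(q',k),
\end{align*}
and set $n:=h(c)$. Concretely, $\C$ is produced by replacing every triple $(q',k,p)$ in the description of $\Delta(q)$ by $(q',h(k),p)$, optionally merging triples that collide in their first two coordinates. Since $h$ is log-space computable and the transformation is purely local on transitions, the whole construction runs in logarithmic space. Almost-sure reachability of~$t$, required for $K_\C$ to be a well defined random variable, is inherited from~$\T$ because the underlying control-state Markov chain is unchanged.

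Correctness then follows from a canonical run-by-run coupling: a run $(q_0,\vec{0})(q_1,d_1)(q_2,d_2)\cdots$ of $\D_\T$ is identified with the run $(q_0,0)(q_1,h(d_1))(q_2,h(d_2))\cdots$ of $\D_\C$, and both have the same probability since the control-state transition probabilities coincide. Any finite prefix reaching~$t$ in $\D_\T$ accumulates a typed cost of the shape $\sum_i \lambda_i c_i$, where $\lambda_i\in\N$ counts how often the typed cost $c_i$ was incurred along the prefix; the coupled prefix in $\D_\C$ accumulates the scalar $\sum_i \lambda_i h(c_i)$. The equivalence supplied by the lemma then gives $\sum_i\lambda_i c_i=c$ iff $\sum_i\lambda_i h(c_i)=n$, so the events $\{K_\T=c\}$ and $\{K_\C=n\}$ consist of the same paired prefixes. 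Summing their probabilities yields $\P(K_\T=c)=\P(K_\C=n)$.

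There is no serious obstacle. The only point that requires attention is that the equivalence supplied by the lemma be uniform in the multiplicities $\lambda_i$, so that it applies to prefixes of arbitrary length, which in particular matters when $\T$ is cyclic. This is exactly how the lemma is stated, so it applies without modification. \qed
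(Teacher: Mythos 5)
Your proposal is correct and follows essentially the same route as the paper: instantiate the homomorphism lemma with the finitely many typed costs occurring in $\Delta$ together with the target~$c$, replace each transition cost $k$ by $h(k)$, set $n := h(c)$, and use the uniformity of the biconditional over all multiplicities $\lambda_i$ to transfer the event $\{K_\T = c\}$ to $\{K_\C = n\}$ run by run. The extra care you take about colliding values $h(k)=h(k')$ and about cyclic chains is consistent with, and slightly more explicit than, the paper's one-line argument.
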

Together with Lem.~\ref{lem-typed-cost-chain-hardness}, this
completes the proof of Prop.~\ref{prop-SLPtoMC}.

\subsubsection{Proof of the Lower Bound in Thm.~\ref{thm-MC-cyclic}.}

Let $G=(V,E)$ be an arithmetic circuit with $v_1, v_2 \in V$. Without
loss of generality we assume that $v_1, v_2$ are on level~$\ell \in
\N$ with odd~$\ell$.  In the following, we construct in logarithmic
space a cost chain~$\C$ and a cost formula~$\varphi$ such that
\begin{equation}
  \label{eq-prop-SLPtoMCreduction}
  \mathit{val}(v_1) \ge \mathit{val}(v_2) \iff \P(K_\C \models
  \varphi) \ge 1/2\,.
\end{equation}

Using Prop.~\ref{prop-SLPtoMC} we first construct two cost chains
$\C_1 = (Q, q_1, t, \Delta)$ and $\C_2 = (Q, q_2, t, \Delta)$ and
$T_1,T_2 \in \N$ such that $\P(K_{\C_i} = T_i) = \mathit{val}(v_i)/m$
holds for $i \in \{1,2\}$ and for $m \in \N$ as given by
Prop.~\ref{prop-SLPtoMC}.  We compute a number $H \in \N$ with
$H \ge T_2$ such that $\P(K_{\C_2} > H) < 1/m$.  By
Prop.~\ref{prop-apriori-upper-bound}, it suffices to take
 \[
  H \ge  \max\left\{T_2, \ \kmax \cdot \left\lceil |Q| \cdot
   \left( \ln(m+1) \big/ \pmin^{|Q|} + 1 \right) \right\rceil\right\}\;,
 \]
  where $\kmax$ and~$\pmin$ refer to~$\C_2$.
Let $\varepsilon := \P(K_{\C_2} > H \ \land \ K_{\C_2} \ne H+1+T_1)$.
We have
\[
 0 \ \le \ \varepsilon \ \le \ \P(K_{\C_2} > H) \ < \ 1/m\,.
\]
Now we combine $\C_1$ and~$\C_2$ to a cost chain $\C = (Q \uplus \{q_0\}, q_0, t,
\widetilde\Delta)$, where $\widetilde\Delta$ extends~$\Delta$ by
\[
 \widetilde\Delta(q_0)(q_1,H+1) = 1/2 \quad \text{and} \quad \widetilde\Delta(q_0)(q_2,0) = 1/2\,.
\]
By this construction, the new cost chain~$\C$ initially either incurs
cost~$H+1$ and then emulates~$\C_1$, or incurs cost~$0$ and then
emulates~$\C_2$.  Those possibilities have probability $1/2$ each.  We
define the cost formula
\[
 \varphi := (x \le T_2-1) \ \lor \ (T_2+1 \le x \le H) \ \lor \ (x = H + 1 + T_1) \,.
\]
The construction of~$\C$ and the definition of~$\varepsilon$ gives that
$\P(K_\C \models \varphi)$ is equal to
\begin{align*}
 & \frac12 \cdot \P(K_{\C_1} = T_1) +
    \frac12 \cdot \big(\P(K_{\C_2} \le H \ \lor \ K_{\C_2} = H+1+T_1) - \P(K_{\C_2} = T_2) \big) \\
= & \frac12 \cdot \mathit{val}(v_1) / m + \frac12 \cdot \big(1 - \varepsilon - \mathit{val}(v_2) / m \big)
\end{align*}
It follows that we have $\P(K_\C \models \varphi) \ge 1/2$ if and only
if $\mathit{val}(v_1) / m \ge \mathit{val}(v_2) / m + \varepsilon$.
Since $0 \le \varepsilon < 1/m$ and
$\mathit{val}(v_1),\mathit{val}(v_2)$ are integer numbers, we have
shown the equivalence~\eqref{eq-prop-SLPtoMCreduction}.
This completes the proof of the \textsc{PosSLP} lower bound.
\qed

\medskip
Let us make two remarks on the construction just given:
First, the representation of~$m$ from
Prop.~\ref{prop-SLPtoMC} is of exponential size.  However, the
computation of~$H$ only requires an upper bound on the
\emph{logarithm} of $m+1$. Therefore, the reduction can be performed
in logarithmic space. Second, the structure of the cost
formula~$\varphi$, in particular the number of inequalities, is fixed.
Only the constants $T_1, T_2, H$ depend on the instance.


\section{Proofs of Section~\ref{sec-MDP}} \label{app-MDP}

\begin{qtheorem}{\ref{thm-MDP-acyclic}}
\stmtthmMDPacyclic
\end{qtheorem}

\begin{proof}
In the main body of the paper we proved the upper bound and gave a
sketch of the \textsc{PSpace}-hardness construction.  Following up on
this, we now provide the details of that reduction.

Let $\kmax := \max \{ k_1, k_2, \ldots, k_n\}$.
We choose $\ell := 1 + n \kmax$.
Before an action in control state~$q_{n{-}2}$ is played,
 at most the following cost is incurred:
\begin{align*}
\frac{n-2}{2} \cdot (\ell + 2 \kmax)
\ = \ \frac{n}{2} \cdot \ell + n \kmax - \ell - 2 \kmax
\ < \ \frac{n}{2} \cdot \ell
\ \le \ \frac{n}{2} \cdot \ell + T \ = \ B\;,
\end{align*}
so one cannot reach the full budget~$B$ before an action in control state~$q_{n{-}2}$ is played.
We choose
 \begin{equation}
  M := 2^{n/2} n^2 \ell^2  \quad \text{and}
   \quad \tau := \left(B - \frac12 \cdot \frac{1}{2^{n/2}}\right) \big/ M \;. \label{eq-choice-M}
 \end{equation}
For the sake of the argument we slightly change the standard \emph{old} MDP to a \emph{new} MDP,
 but without affecting the scheduler's winning chances.
The control state~$t$ is removed.
Any old transition~$\delta$ from~$q_i$ to~$t$ is redirected:
 with the probability of~$\delta$ the new MDP transitions to~$q_{i+2}$
  and incurs the cost of~$\delta$;
 in addition, one \emph{marble} is gained if the accumulated cost including the one of~$\delta$
  is at most the budget~$B$.
The idea is that a win in the old MDP (i.e., a transition to~$t$ having kept within budget)
 should correspond exactly to gaining \emph{at least one} marble in the new MDP.
The new MDP will be easier to analyse.

We make the definition of the new MDP more precise:
When in~$q_i$, the new MDP transitions to~$q_{i+2}$ with probability~$1$.
The cost incurred and marbles gained during that transition depend on the action taken and on probabilistic decisions as follows.
Suppose action~$a_j$ (with $j \in \{0,1\}$) is taken in~$q_i$, and cost~$C_i$ has been accumulated up to~$q_i$.
Then:
\begin{enumerate}
\item $j' \in \{0,1\}$ is chosen with probability $1/2$ each.
\item Cost $\ell + j \cdot k_{i+1} + j' \cdot k_{i+2}$ is incurred.
\item If $C_i + \ell + j \cdot k_{i+1} + j' \cdot k_{i+2} \le B$, then, in addition,
  one marble is gained with probability $\frac{\ell + j \cdot k_{i+1} + j' \cdot k_{i+2}}{M}$,
  and no marble is gained with probability $1 - \frac{\ell + j \cdot k_{i+1} + j' \cdot k_{i+2}}{M}$.
\end{enumerate}
In the new MDP, the scheduler's objective is, during the path from~$q_0$ to~$q_n$,
 to gain \emph{at least one marble}.
Since an optimal scheduler in the new MDP does not need to take into account
 whether or when marbles have been gained,
 we assume that schedulers in the new MDP do not take marbles into account.
The new MDP is constructed so that the winning chances are the same in the old and the new MDP;
 in fact, any scheduler in the old MDP translates into a scheduler with the same winning chance in the new MDP,
 and vice versa.

Fix a scheduler~$\sigma$ in the new MDP.
A vector $\vec{x} = (x_2, x_4, \ldots, x_n) \in \{0,1\}^{n/2}$ determines the cost incurred during a run,
 in the following way:
when $\sigma$ takes action~$a_j$ (for $j \in \{0,1\}$) in state~$q_i$,
 then cost
  $c_i(\sigma, \vec{x}) := \ell + j \cdot k_{i+1} + x_{i+2} \cdot k_{i+2}$
  is added upon transitioning to~$q_{i+2}$.
Conversely, a run determines the vector~$\vec{x}$.
Let $\hat{p}^\sigma_i(\vec{x})$ denote the conditional probability (conditioned under~$\vec{x}$)
 that a marble is gained upon transitioning from~$q_i$ to~$q_{i+2}$.
We have:
\begin{equation}
 \hat{p}^\sigma_i(\vec{x}) =
 \begin{cases} c_i(\sigma, \vec{x}) / M & \text{if }
  c_0(\sigma, \vec{x}) + c_2(\sigma, \vec{x}) + \cdots + c_i(\sigma, \vec{x}) \le B \\
               0 & \text{otherwise}
 \end{cases}
\label{eq-p-hat}
\end{equation}
It follows that we have:
\begin{align}
\hat{p}^\sigma_i(\vec{x}) & \ \le \ \frac{2 \ell}{M} \label{eq-bound-hat-p} \\
\sum_{\text{even }i=0}^{n-2} \hat{p}^\sigma_i(\vec{x}) & \ \le \ B/M \label{eq-total-p-upper}
\end{align}
Denote by $p^\sigma_i$ (for $i=0, 2, \ldots, n{-}2$) the (total) probability
 that a marble is gained upon transitioning from~$q_i$ to~$q_{i+2}$.
By the law of total probability we have
\begin{align}
 p^\sigma_i & = \sum_{\vec{x} \in \{0,1\}^{n/2}} \frac{1}{2^{n/2}} \cdot \hat{p}^\sigma_i(\vec{x}) \label{eq-total-probability} \\
 \text{and hence, by~\eqref{eq-bound-hat-p}, \qquad \ } p^\sigma_i & \le \frac{2 \ell}{M} \label{eq-bound-p}
\end{align}
We show that Player Odd has a winning strategy in the \textsc{QSubsetSum} game
 if and only if the probability of winning in the new MDP is at least~$\tau$.
\begin{itemize}
\item
Assume that Player Odd has a winning strategy in the \textsc{QSubsetSum} game.
Let $\sigma$ be the scheduler in the new MDP that emulates Player Odd's winning strategy from the \textsc{QSubsetSum} game.
Using~$\sigma$ the accumulated cost upon reaching~$q_n$ is exactly~$B$, with probability~$1$.
So for all $\vec{x} \in \{0,1\}^{n/2}$ we have:
\begin{align}
 c_0(\sigma, \vec{x}) + c_2(\sigma, \vec{x}) + \cdots + c_{n-2}(\sigma, \vec{x}) \ = \ B \label{eq-exactly-B}
\end{align}
Thus:
 \begin{equation}
 \begin{aligned}
  \sum_{\text{even }i=0}^{n-2} p^\sigma_i
  & \ = \ \sum_{\vec{x} \in \{0,1\}^{n/2}} \frac{1}{2^{n/2}}
        \sum_{\text{even }i=0}^{n-2} \hat{p}^\sigma_i(\vec{x}) && \text{\qquad by~\eqref{eq-total-probability}} \\
  & \ = \ \sum_{\vec{x} \in \{0,1\}^{n/2}} \frac{1}{2^{n/2}}
        \sum_{\text{even }i=0}^{n-2} c_i(\sigma, \vec{x}) / M
          && \text{\qquad by \eqref{eq-p-hat} and~\eqref{eq-exactly-B}} \\
  & \ = \ B/M && \text{\qquad by \eqref{eq-exactly-B}}
 \end{aligned}
 \label{eq-sum-pi}
 \end{equation}
Further we have:
\begin{equation} \label{eq-sum-pipj}
\begin{aligned}
\mathop{\sum_{\text{even }i,j}}_{i<j\le n{-}2} p^\sigma_i p^\sigma_j
& \ \mathop{\le}^\text{by~\eqref{eq-bound-p}} \ \binom{n/2}{2} \left( \frac{2 \ell}{M} \right)^2
  \ = \ \frac{\frac{n}{2} \cdot \left(\frac{n}{2} -1\right) \cdot 4 \ell^2}{2 M^2} \\
& \ \le \ \frac{n^2 \ell^2}{2 M^2} \ \mathop{=}^\text{by~\eqref{eq-choice-M}} \
  \left(\frac{1}{2} \cdot \frac{1}{2^{n/2}}\right) \big/ M
\end{aligned}
\end{equation}
Recall that the probability of winning equals the probability of gaining at least one marble.
The latter probability is, by the inclusion-exclusion principle, bounded below as follows:
 \[
  \sum_{\text{even }i=0}^{n-2} p^\sigma_i
      \ \ - \ \mathop{\sum_{\text{even }i,j}}_{i<j\le n{-}2} p^\sigma_i p^\sigma_j
      \ \mathop{\ge}^\text{by \eqref{eq-sum-pi} and~\eqref{eq-sum-pipj}} \
       \left(B - \frac12 \cdot \frac{1}{2^{n/2}}\right) \big/ M
      \ \mathop{=}^\text{by~\eqref{eq-choice-M}} \ \tau
 \]
We conclude that the probability of winning is at least~$\tau$.
\item
Assume that Player Odd does not have a winning strategy in the \textsc{QSubsetSum} game.
Consider any scheduler~$\sigma$ for the new MDP.
Since the corresponding strategy in the \textsc{QSubsetSum} game is not winning,
 there exists $\vec{y} \in \{0,1\}^{n/2}$ with
  $c_0(\sigma, \vec{y}) + c_2(\sigma, \vec{y}) + \cdots + c_{n-2}(\sigma, \vec{y}) \ne B$.
By~\eqref{eq-p-hat} it follows:
\begin{equation}
\sum_{\text{even }i=0}^{n-2} \hat{p}^\sigma_i(\vec{y}) \ \le \ (B-1)/M
\label{eq-total-p-upper-strict}
\end{equation}
By the union bound the probability of gaining at least one marble is bounded above as follows:
\[
 \begin{aligned}
  \sum_{\text{even }i=0}^{n-2} p^\sigma_i
  & \ = \ \sum_{\vec{x} \in \{0,1\}^{n/2}} \frac{1}{2^{n/2}}
        \sum_{\text{even }i=0}^{n-2} \hat{p}^\sigma_i(\vec{x}) && \text{\qquad by~\eqref{eq-total-probability}} \\
  & \ \le \ \left(1 - \frac{1}{2^{n/2}}\right) \cdot \frac{B}{M} \ + \ \frac{1}{2^{n/2}} \cdot \frac{B-1}{M}
  && \text{\qquad by \eqref{eq-total-p-upper} and~\eqref{eq-total-p-upper-strict}} \\
  & \ = \ \left(B - \frac{1}{2^{n/2}}\right) \big/ M \\
  & \ < \ \tau && \text{\qquad by~\eqref{eq-choice-M}}
 \end{aligned}
\]
We conclude that the probability of winning is less than~$\tau$.
\end{itemize}
This completes the log-space reduction. \qed
\end{proof}

\begin{qtheorem}{\ref{thm-MDP-general}}
\stmtthmMDPgeneral
\end{qtheorem}

\begin{proof}
We reduce from the problem of determining the winner in a countdown game~\cite{JurdzinskiSL08}.
A \emph{countdown game} is a tuple $(S, \mathord{\ctran{}}, s_0, T)$ where
 $S$ is a finite set of states,
 $\mathord{\ctran{}} \subseteq S \times \N\setminus\{0\} \times S$ is a transition relation,
 $s_0 \in S$ is the initial state,
 and $T$ is the final value.
We write $s \ctran{k} r$ if $(s, k, r) \in \mathord{\ctran{}}$.
A configuration of the game is an element $(s,c) \in S \times \N$.
The game starts in configuration $(s_0,0)$ and proceeds in moves:
 if the current configuration is $(s,c) \in S \times \N$,
  first Player~$1$ chooses a number~$k$ with $0 < k \le T - c$ and $s \ctran{k} r$ for at least one $r \in S$;
  then Player~$2$ chooses a state $r \in S$ with $s \ctran{k} r$.
The resulting new configuration is $(r,c+k)$.
Player~$1$ wins if she hits a configuration from $S \times \{T\}$, and she loses if she cannot move (and has not yet won).
(We have slightly paraphrased the game from~\cite{JurdzinskiSL08} for technical convenience,
 rendering the term \emph{countdown} game somewhat inept.)

The problem of determining the winner in a countdown game was shown
\textsc{EXP}-complete in~\cite{JurdzinskiSL08}.  Let $(S,
\mathord{\ctran{}}, s_0, T)$ be a countdown game.
We construct a cost process $\C = (Q, s_0, t, A, \En, \Delta)$
 so that Player~$1$ can win the countdown game if and only if there is a scheduler~$\sigma$
 with $\P_\sigma(K=T) = 1$.
The intuition is that Player~$1$ corresponds to the scheduler and Player~$2$ corresponds to randomness.
We take
\[
 Q := S \cup \{q_i : i \in \N, \ 2^i \le T\} \cup \{t\}\,. 
\]
Intuitively, the states in~$S$ are used in a \emph{first phase}, which directly reflects the countdown game.
The states $q_i$ are used in a \emph{second phase}, which is acyclic and ends in the final control state~$t$.

\newcommand{\astop}{a_\mathit{stop}}%
For all $s \in S$ we take
\[
 \En(s) := \{\astop\} \cup \{k \in \N \setminus \{0\} : \exists\, r \in S .\, s \ctran{k} r\}\,.
\]
Whenever $s \ctran{k} r$, we set $\Delta(s,k)(r,k) > 0$.
(We do not specify the exact values of positive probabilities, as they do not matter.
For concreteness one could take a uniform distribution.)
Those transitions directly reflect the countdown game.
Whenever $s \ctran{k} r$, we also set $\Delta(s,k)(q_0,k) > 0$.
Those transitions allow ``randomness'' to enter the second phase, which starts in~$q_0$.
Further, for all $s \in S$ we set $\Delta(s,\astop)(t,0) = 1$.
Those transitions allow the scheduler to jump directly to the final control state~$t$,
 skipping the second phase.

Now we describe the transitions in the second phase.
\newcommand{\imax}{{i_\textit{max}}}%
Let $\imax \in \N$ be the largest integer with $2^\imax \le T$.
For all $i \in \{0, 1, \ldots, \imax\}$ we take $\En(q_i) = \{a_0, a_1\}$ and
\begin{align*}
 \Delta(q_i,a_0)(q_{i+1},0)   & = 1 \text{\quad and} \\
 \Delta(q_i,a_1)(q_{i+1},2^i) & = 1 \;,
\end{align*}
where $q_{\imax+1}$ is identified with~$t$.
The second phase allows the scheduler to incur an arbitrary cost
 between $0$ and~$T$ (and possibly more).
That phase is acyclic and leads to~$t$.

Observe that $t$ is reached with probability~$1$.
We show that Player~$1$ can win the countdown game if and only if
 the scheduler in the cost process can achieve $K=T$ with probability~$1$.

Assume Player~$1$ can win the countdown game.
Then the scheduler can emulate Player~$1$'s winning strategy.
If randomness enters the second phase while the cost~$c$ accumulated so far is at most~$T$,
 then the scheduler incurs additional cost $T-c$ in the second phase and wins.
If and when accumulated cost exactly~$T$ is reached in the first phase,
 the scheduler plays~$\astop$, so it jumps to~$t$ and wins.
Since the scheduler emulates Player~$1$'s winning strategy,
 it will not get in a state in which the accumulated cost is larger than~$T$.

Conversely, assume Player~$2$ has a winning strategy in the countdown game.
If the scheduler jumps to~$t$ while accumulated cost~$T$ has not yet been reached, the scheduler loses.
If the scheduler does not do that,
 randomness emulates with non-zero probability Player~$2$'s winning strategy.
This leads to a state $(s,c) \in S \times \N$ with $c>T$,
 from which the scheduler loses with probability~$1$.
This completes the log-space reduction. \qed
\end{proof}

\begin{qcorollary}{\ref{cor-cost-utility}}
\stmtcorcostutility
\end{qcorollary}
\begin{proof}
Membership in \textsc{EXP} follows from Prop.~\ref{fact-EXPTIME-upper}.

For hardness, observe that
the proof of Thm.~\ref{thm-MDP-general} reveals that the following problem is
\textsc{EXP}-hard. The \emph{qualitative cost problem} asks, given a
cost process and $T \in \N$, whether there exists a scheduler~$\sigma$
with $\P_\sigma(K = T) = 1$.  
Reduce the qualitative
cost problem to the cost-utility problem where both the cost and the
utility in the new MDP are increased as the cost in the cost process.
Then we have $\P_\sigma(K = T) = 1$ in the cost process if and only if
in the new MDP the cost is at most~$T$ and the utility is at least~$T$
with probability~$1$. 
\qed
\end{proof}

\section{Proofs of Section~\ref{sec-extension}} \label{app-extension}

\begin{qtheorem}{\ref{thm-universal}}
\stmtthmuniversal
\end{qtheorem}
\begin{proof}
Considering the proof sketch in the main text,
it remains to show that the universal cost problem for acyclic cost processes and atomic cost formulas is \textsc{PSpace}-hard.
By a straightforward logspace reduction as in the beginning of the proof sketch in the main text, it suffices to prove \textsc{PSpace}-hardness of the following problem: given an acyclic cost process~$\C$ and a number $B \in \N$ and a probability~$\tau$, does there exist a scheduler~$\sigma$ with $\P_\sigma(K_\C < B) < \tau$?

For that we adapt the reduction from Theorem~\ref{thm-MDP-acyclic}.
The differences to that reduction arise from the fact that the scheduler now tries to maximise the probability of achieving cost \emph{at least}~$B$.
Given an instance $(k_1, \ldots, k_n, T)$, where $n$ is even, of the
\textsc{QSubsetSum} problem, we take, as before,
 \[
  B := \frac{n}{2} \cdot \ell + T \,,
 \]
for an $\ell \in \N$ defined later.
Further, we construct an acyclic cost process $\C = (Q, q_0, t, A, \En, \Delta)$ similarly as before.
In particular, we take again $Q = \{q_0, q_2, \ldots, q_{n-2}, q_n, t\}$.
For a large number $M \in \N$, defined later, we set for all even $i \le n{-}2$ and for $j \in \{0,1\}$:
\begin{align*}
 \Delta(q_i,a_j)(q_{i+2},\ell+j \cdot k_{i+1})           &= \frac12 \cdot \left(1 - \frac{\ell+j \cdot k_{i+1}}{M} \right) \\
 \Delta(q_i,a_j)(q_{i+2},\ell+j \cdot k_{i+1}+k_{i+2}) &= \frac12 \cdot \left(1 - \frac{\ell+j \cdot k_{i+1}+k_{i+2}}{M} \right) \\
 \Delta(q_i,a_j)(t,0)                                   &= \frac12 \cdot \frac{\ell+j \cdot k_{i+1}}{M} + \frac12 \cdot \frac{\ell+j \cdot k_{i+1}+k_{i+2}}{M}%
\intertext{%
So with a high probability the MDP transitions from~$q_i$ to~$q_{i+2}$,
 and cost $\ell$, $\ell+k_{i+1}$, $\ell+k_{i+2}$, $\ell+k_{i+1}+k_{i+2}$ is incurred,
 depending on the scheduler's (i.e., Player Odd's) actions and on the random (Player Even) outcome.
But with a small probability, which is proportional to the cost that would be otherwise incurred,
 the MDP takes a zero-cost transition to~$t$, which is a ``loss'' for the scheduler, because, as in the old reduction, $\ell$ is chosen big enough so that the total cost is strictly smaller than~$B$ before an action in control state~$q_{n-2}$ has been played.
There is a single zero-cost transition from $q_n$ to~$t$:
}
 \Delta(q_n,a)(t,0) &= 1 \qquad \text{with $\En(q_n) = \{a\}$}
\end{align*}
The MDP is designed so that the scheduler probably ``wins'' (i.e., reaches cost at least~$B$), if Player Odd can always reach cost at least~$T$;
but whenever cost~$k$ is incurred, there is a small probability~$k/M$ of losing.
Since $1/M$ is small, the overall probability of losing is approximately $C/M$ if total cost $C \ge B$ is incurred.
In order to minimise this probability, the scheduler wants to minimise the total cost while still incurring cost at least~$B$,
so the optimal scheduler will target~$B$ as total cost.

Similarly to the old reduction, the values for $\ell$, $M$ and~$\tau$ need to be chosen carefully, as the overall probability of losing is not exactly the sum of the individual losing probabilities.
Rather, this sum is -- by the ``union bound'' -- only an upper bound.
One needs to show that the sum approximates the real probability closely enough.

Now we give the details.
Let $\kmax := \max \{ k_1, k_2, \ldots, k_n\}$.
We choose $\ell := 1 + n \kmax$ as in the old reduction,
so one cannot reach the full budget~$B$ before an action in control state~$q_{n{-}2}$ is played.
Without loss of generality we can assume that $T \le n \kmax$, as otherwise the instance of the \textsc{QSubsetSum} problem would be trivial.
Hence we have:
\begin{equation}
 B+1 \ = \ \frac{n}{2} \ell + T + 1 \ \le \ \frac{n}{2} \ell + \ell \ \le \ n \ell \ \le \ n^2 \ell^2
 \label{eq-univ-B-upper-bound}
\end{equation}
We choose
 \begin{equation}
  M := 2^{n/2} n^2 \ell^2  \quad \text{and}
   \quad \tau := \left(B + \frac12 \cdot \frac{1}{2^{n/2}}\right) \big/ M \;. \label{eq-univ-choice-M}
 \end{equation}
For the sake of the argument we slightly change the standard \emph{old} MDP to a \emph{new} MDP,
 but without affecting the scheduler's winning chances.
The new MDP will be easier to analyse.
The control state~$t$ is removed.
When in~$q_i$, the new MDP transitions to~$q_{i+2}$ with probability~$1$.
The cost incurred and marbles gained during that transition depend on the action taken and on probabilistic decisions as follows.
Suppose action~$a_j$ (with $j \in \{0,1\}$) is taken in~$q_i$, and cost~$C_i$ has been accumulated up to~$q_i$.
Then:
\begin{enumerate}
\item $j' \in \{0,1\}$ is chosen with probability $1/2$ each.
\item Cost $\ell + j \cdot k_{i+1} + j' \cdot k_{i+2}$ is incurred.
\item 
  One marble is gained with probability $\frac{\ell + j \cdot k_{i+1} + j' \cdot k_{i+2}}{M}$,
  and no marble is gained with probability $1 - \frac{\ell + j \cdot k_{i+1} + j' \cdot k_{i+2}}{M}$.
\end{enumerate}
In the new MDP, the scheduler's objective is, during the path from~$q_0$ to~$q_n$,
 \emph{to gain no marble and to accumulate cost at least~$B$}.
Since an optimal scheduler in the new MDP does not need to take into account
 whether or when marbles have been gained,
 we assume that schedulers in the new MDP do not take marbles into account.
The new MDP is constructed so that the winning chances are the same in the old and the new MDP;
 in fact, any scheduler in the old MDP translates into a scheduler with the same winning chance in the new MDP,
 and vice versa.

Fix a scheduler~$\sigma$ in the new MDP.
A vector $\vec{x} = (x_2, x_4, \ldots, x_n) \in \{0,1\}^{n/2}$ determines the cost incurred during a run,
 in the following way:
when $\sigma$ takes action~$a_j$ (for $j \in \{0,1\}$) in state~$q_i$,
 then cost
  $c_i(\sigma, \vec{x}) := \ell + j \cdot k_{i+1} + x_{i+2} \cdot k_{i+2}$
  is added upon transitioning to~$q_{i+2}$.
Conversely, a run determines the vector~$\vec{x}$.
Let $\hat{p}^\sigma_i(\vec{x})$ denote the conditional probability (conditioned under~$\vec{x}$)
 that a marble is gained upon transitioning from~$q_i$ to~$q_{i+2}$.
We have:
\begin{align}
 \hat{p}^\sigma_i(\vec{x}) \ & = \ c_i(\sigma, \vec{x}) / M \label{eq-univ-p-hat} \\
 & \le \ 2 \ell / M \label{eq-univ-bound-hat-p}
\end{align}
Denote by $p^\sigma_i$ (for $i=0, 2, \ldots, n{-}2$) the (total) probability
 that a marble is gained upon transitioning from~$q_i$ to~$q_{i+2}$.
By the law of total probability we have
\begin{align}
 p^\sigma_i & = \sum_{\vec{x} \in \{0,1\}^{n/2}} \frac{1}{2^{n/2}} \cdot \hat{p}^\sigma_i(\vec{x}) \;. \nonumber \intertext{It follows:}
 p^\sigma_i & = \sum_{\vec{x} \in \{0,1\}^{n/2}} \frac{1}{2^{n/2}} \cdot c_i(\sigma, \vec{x}) / M 
 && \text{by~\eqref{eq-univ-p-hat}} \label{eq-univ-p-ext} \\
 p^\sigma_i & \le \frac{2 \ell}{M} 
 && \text{by~\eqref{eq-univ-bound-hat-p}} \label{eq-univ-bound-p}
\end{align}
We show that Player Odd has a winning strategy in the \textsc{QSubsetSum} game
 if and only if the probability of losing in the new MDP is less than~$\tau$.
\begin{itemize}
\item
Assume that Player Odd has a winning strategy in the \textsc{QSubsetSum} game.
Let $\sigma$ be the scheduler in the new MDP that emulates Player Odd's winning strategy from the \textsc{QSubsetSum} game.
Using~$\sigma$ the accumulated cost upon reaching~$q_n$ is exactly~$B$, with probability~$1$.
So for all $\vec{x} \in \{0,1\}^{n/2}$ we have:
\begin{align}
 c_0(\sigma, \vec{x}) + c_2(\sigma, \vec{x}) + \cdots + c_{n-2}(\sigma, \vec{x}) \ = \ B \label{eq-univ-exactly-B}
\end{align}
Since the scheduler accumulates, with probability~$1$, cost exactly~$B$, the probability of losing equals the probability of gaining at least one marble.
By the union bound this probability is bounded above as follows:
\[
 \begin{aligned}
  \sum_{\text{even }i=0}^{n-2} p^\sigma_i
  & \ = \ \sum_{\vec{x} \in \{0,1\}^{n/2}} \frac{1}{2^{n/2}}
        \sum_{\text{even }i=0}^{n-2} c_i(\sigma, \vec{x}) / M
          && \text{\qquad by \eqref{eq-univ-p-ext}} \\
  & \ = \ B/M && \text{\qquad by \eqref{eq-univ-exactly-B}} \\
  & \ < \ \tau && \text{\qquad by \eqref{eq-univ-choice-M}}
 \end{aligned}
\]
We conclude that the probability of losing is less than~$\tau$.
\item
Assume that Player Odd does not have a winning strategy in the \textsc{QSubsetSum} game.
Consider any scheduler~$\sigma$ for the new MDP.
Suppose that there exists $\vec{y} \in \{0,1\}^{n/2}$ with
  $c_0(\sigma, \vec{y}) + c_2(\sigma, \vec{y}) + \cdots + c_{n-2}(\sigma, \vec{y}) < B$.
Recall that the scheduler loses if it accumulates cost less than~$B$.
So the probability of losing is at least
\[
\frac{1}{2^{n/2}} \ = \ \frac{n^2 \ell^2}{2^{n/2} n^2 \ell^2} \ \mathop{\ge}^{\text{by~\eqref{eq-univ-B-upper-bound}}} \ \frac{B+1}{M} \ 
\mathop{\ge}^{\text{by~\eqref{eq-univ-choice-M}}} \ \tau\;.
\]
So we can assume for the rest of the proof that for all $\vec{x} \in \{0,1\}^{n/2}$ we have
\begin{equation}
 c_0(\sigma, \vec{x}) + c_2(\sigma, \vec{x}) + \cdots + c_{n-2}(\sigma, \vec{x}) \ \ge \ B\;.
\label{eq-univ-c-1}
\end{equation}
Since the strategy corresponding to~$\sigma$ in the \textsc{QSubsetSum} game is not winning,
 there exists $\vec{y} \in \{0,1\}^{n/2}$ with
\begin{equation}
  c_0(\sigma, \vec{y}) + c_2(\sigma, \vec{y}) + \cdots + c_{n-2}(\sigma, \vec{y}) \ \ge \ B+1\;.
\label{eq-univ-c-2}
\end{equation}
We have:
\begin{equation}
 \begin{aligned}
  & \qquad \sum_{\text{even }i=0}^{n-2} p^\sigma_i \\
  & \ = \ \sum_{\vec{x} \in \{0,1\}^{n/2}} \frac{1}{2^{n/2}}
        \sum_{\text{even }i=0}^{n-2} c_i(\sigma, \vec{x}) / M
          && \text{\qquad by \eqref{eq-univ-p-ext}} \\
  & \ \ge \ \left(1 - \frac{1}{2^{n/2}}\right) \cdot \frac{B}{M} \ + \ \frac{1}{2^{n/2}} \cdot \frac{B+1}{M}
  && \text{\qquad by \eqref{eq-univ-c-1} and~\eqref{eq-univ-c-2}} \\
  & \ = \ \left(B + \frac{1}{2^{n/2}}\right) \big/ M \\
 \end{aligned}
 \label{eq-univ-sum-pi}
\end{equation}
Further we have:
\begin{equation} \label{eq-univ-sum-pipj}
\begin{aligned}
\mathop{\sum_{\text{even }i,j}}_{i<j\le n{-}2} p^\sigma_i p^\sigma_j
& \ \mathop{\le}^\text{by~\eqref{eq-univ-bound-p}} \ \binom{n/2}{2} \left( \frac{2 \ell}{M} \right)^2
  \ = \ \frac{\frac{n}{2} \cdot \left(\frac{n}{2} -1\right) \cdot 4 \ell^2}{2 M^2} \\
& \ \le \ \frac{n^2 \ell^2}{2 M^2} \ \mathop{=}^\text{by~\eqref{eq-univ-choice-M}} \
  \left(\frac{1}{2} \cdot \frac{1}{2^{n/2}}\right) \big/ M
\end{aligned}
\end{equation}
Recall that the scheduler loses if it gains at least one marble.
So the probability of losing is, by the inclusion-exclusion principle, bounded below as follows:
 \[
  \sum_{\text{even }i=0}^{n-2} p^\sigma_i
      \ \ - \ \mathop{\sum_{\text{even }i,j}}_{i<j\le n{-}2} p^\sigma_i p^\sigma_j
      \ \mathop{\ge}^\text{by \eqref{eq-univ-sum-pi} and~\eqref{eq-univ-sum-pipj}} \
       \left(B + \frac12 \cdot \frac{1}{2^{n/2}}\right) \big/ M
      \ \mathop{=}^\text{by~\eqref{eq-univ-choice-M}} \ \tau
 \]
We conclude that the probability of losing is at least~$\tau$.
\end{itemize}
This completes the log-space reduction.
\qed
\end{proof}

}{}

\end{document}
